\documentclass[11pt]{article}
\usepackage[utf8]{inputenc}

\usepackage{typearea}
\typearea{15}

\usepackage{amsfonts, amsmath, amsthm, amssymb, array, caption, comment, graphicx, makecell, multirow, paralist, romanbar, subcaption, todonotes}

\usepackage{cite}

\usepackage{tikz}
\usetikzlibrary{arrows}

\usetikzlibrary{shapes,calc,arrows.meta,bending,intersections,positioning}
\tikzset{
	position/.style args={#1:#2 from #3}{
		at=(#3.#1), anchor=#1+180, shift=(#1:#2)
	}
}

\newtheorem{theorem}{Theorem}

\newtheorem{proposition}[theorem]{Proposition}
\newtheorem{lemma}[theorem]{Lemma}
\newtheorem{corollary}[theorem]{Corollary}

\newtheorem{observation}[theorem]{Observation}
\newtheorem{example}[theorem]{Example}

\newcommand{\classNP}{\textsf{NP}}

\newcommand{\tSAT}{\textsc{3-Satisfiability}}

\newcommand{\DHamilPath}{\textsc{Directed Hamiltonian Path}}

\newcommand{\vecf}{\mathbf{f}}
\newcommand{\veca}{\mathbf{a}}

\newcommand{\calM}{\mathcal{M}}
\newcommand{\TM}{threshold-${\calM}$}

\newcommand{\vectr}{\mathbf{tr}}
\newcommand{\vectp}{\mathbf{tp}}

\newcommand{\vectau}{\boldsymbol{\tau}}

\newcommand{\RR}[0]{\mathbb{R}}

\allowdisplaybreaks

\def\h{0.13 \textheight}

\newcolumntype{x}[1]{>{\centering\arraybackslash}p{#1}}
\newcommand\diag[4]{%
	\multicolumn{1}{p{#2}|}{\hskip-\tabcolsep
		$\vcenter{\begin{tikzpicture}[baseline=0,anchor=south west,inner sep=#1]
				\path[use as bounding box] (0,0) rectangle (#2+2\tabcolsep,\baselineskip);
				\node[minimum width={#2+2\tabcolsep-\pgflinewidth},
				minimum  height=\baselineskip+\extrarowheight-\pgflinewidth] (box) {};
				\draw[line cap=round] (box.north west) -- (box.south east);
				\node[anchor=south west] at (box.south west) {#3};
				\node[anchor=north east] at (box.north east) {#4};
		\end{tikzpicture}}$\hskip-\tabcolsep}}
\begin{document}
\title{Seniorities and Minimal Clearing in Financial Network Games\thanks{Supported by DFG grants Ho 3831/5-1, 6-1 and 7-1. Institute for Computer Science, Goethe University Frankfurt, Germany. \texttt{$\{$lwilhelmi,mhoefer$\}$@em.uni-frankfurt.de}}}

\author{Martin Hoefer \and Lisa Wilhelmi}
\date{}
\maketitle              
\begin{abstract}
Financial network games model payment incentives in the context of networked liabilities. 
In this paper, we advance the understanding of incentives in financial networks in two important directions: minimal clearing (arising, e.g., as a result of sequential execution of payments) and seniorities (i.e., priorities over debt contracts).
    
We distinguish between priorities that are chosen endogenously or exogenously. For endogenous priorities and standard (maximal) clearing, the games exhibit a coalitional form of weak acyclicity. A strong equilibrium exists and can be reached after a polynomial number of deviations. Moreover, there is a strong equilibrium that is optimal for a wide variety of social welfare functions. In contrast, for minimal clearing there are games in which no optimal strategy profile exists, even for standard utilitarian social welfare. Perhaps surprisingly, a strong equilibrium still exists and, for a wide range of strategies, can be reached after a polynomial number of deviations.  In contrast, for exogenous priorities, equilibria can be absent and equilibrium existence is \classNP-hard to decide, for both minimal and maximal clearing.
\end{abstract}
\section{Introduction}

The complex interconnections between financial institutions are a major source of systemic risk in modern economies. This has become clear over the last decade when the financial crisis unfolded. Ever since the crash of 2008, governments, regulators, and financial institutions have been making unprecedented efforts to maintain functioning and stability of the financial system (and more generally, the global economy). Perhaps surprisingly, however, little is known about the inherent challenges arising in this domain. What are sources of systemic risk in financial networks? What is the (computational) complexity of the resulting decision and optimization problems? Questions of this type are not well-understood, even in seemingly simple cases when the network is composed of elementary debt contracts. Perhaps the most fundamental operation in these networks, and the basis for most analyses of systemic risk, is \emph{clearing}, i.e., the settling of debt. Clearing is non-trivial in a network context because institutions typically depend on other institutions to satisfy their own obligations. Recent work in theoretical computer science has started to carve out interesting effects arising in this context~\cite{SchuldenzuckerSB17, SchuldenzuckerSB20, SchuldenzuckerS20, BertschingerHS20, PappW21, PappW20, PappW21wine, PappW21ec}. Despite these advances, incentives and economic implications in the clearing process are not well-understood.

In the majority of the literature, clearing is interpreted as a mechanical process where payments are prescribed to banks by a central entity following a fixed set of rules (e.g., \emph{proportional} payments w.r.t.\ amount of debt). In reality, however, financial institutions have to be expected to act strategically within the limits of their contractual and legal obligations. Therefore, in this paper, we consider financial network games (or \emph{clearing games})~\cite{BertschingerHS20}, a novel game-theoretic framework based on the classic network model by Eisenberg and Noe~\cite{EisenbergN01}. In these games, institutions strategically choose to allocate payments in order to clear as much of their debt as possible. We strive to understand the existence, structure and computational complexity of \emph{stable payments}, i.e., Nash and strong equilibria. Our focus lies on two aspects that received little or no attention in the algorithmic literature so far: seniorities and minimal clearing.

The \emph{seniority} of a debt contract is the priority that this contract enjoys in the payment order of the bank. When a bank has contracts of different seniority, it first needs to pay all debt of highest priority before spending money on obligations with lower priority. Clearing games with ``endogenous'' priorities, in which banks strategically choose the order of payment at the time of clearing, have been subject of interest very recently~\cite{BertschingerHS20,Kanellopoulos21}. Arguably, however, in financial markets such priorities are often exogenous, i.e., determined in advance and not subject to strategic choice at the time of clearing. Exogenous priorities are considered regularly across the literature~\cite{Diamond93,Fischer14,KrieterR21}, but their impact on equilibria in clearing games has not yet been considered. We are interested in the effect of limiting the strategic interaction of banks with exogenous priorities (potentially with ties), which imply strategic choices constrained by the seniority structure.

Clearing states represent solutions of a fixed point problem. The classic network model with proportional payments~\cite{EisenbergN01} usually guarantees a \emph{unique} clearing state -- only a few cases allow multiple clearing states to exist. In that case, the network is assumed to settle on a maximum clearing state, in which as much of the debt as possible is paid~\cite{ElsingerLS06,RogersV13}. This is justified when a central authority has structural insight into the network and the ability to steer the clearing process.

When banks choose payment functions strategically, the system allows multiple clearing states, and the choice of clearing state becomes significant. Moreover, an orchestrated clearing effort steering the process to a maximal fixed point is not immediately available, at least on a global scale, which can be seen by rather heterogeneous regulatory efforts to counter several financial crises over the last decade. Instead, clearing may occur in a decentralized, local, and sequential fashion, where institutions repay debt (partially) as soon as they have liquid assets available. In such an environment, clearing steers towards the \emph{minimal} fixed point of the payment function~\cite{CsokaH18}. Therefore, we focus on the \emph{minimal clearing state}, i.e., the set of consistent payments that clear the least debt. The properties of minimal clearing are much less understood in the literature.

\paragraph{Results and Contribution.}
We analyze equilibrium existence and computational complexity of pure Nash equilibria in financial clearing games with different priority and clearing properties. We distinguish between exogenous (or \emph{fixed}) and endogenous (or \emph{non-fixed}) priorities. Priorities are captured as priority orderings, possibly with \emph{thresholds} defining installments for each contract. This includes, as a special case, a priority \emph{partition} of contracts, i.e., an ordering with ties in which debt contracts are being cleared. For strategic payments to contracts with the \emph{same} priority, we consider a very general approach. We analyze both \emph{minimal} as well as standard maximal clearing. 

Formally, we define a framework of games based on a notion of so-called threshold-$\calM$ strategies. The thresholds capture the priority structure. The set $\calM$ specifies a set of functions for the payments to parts with the same priority. Most of our results apply to a very general set of threshold-$\calM$ strategies for all suitable choices of $\calM$, which is equivalent to all monotone payment strategies. Popular and natural choices for $\calM$ are strategies resulting in proportional or ranking-based payments. We will use them for exposition throughout the paper.

After defining the model in Section~\ref{sec:model}, we consider non-fixed thresholds and maximal clearing. Recent work~\cite{BertschingerHS20,Kanellopoulos21} proves non-existence and \classNP-hardness results for deciding existence of a Nash equilibrium. In contrast, we show in Section~\ref{subsec:max-clearing} that clearing games with threshold-$\calM$ strategies have a coalitional form of weak acyclicity, for every suitable set $\calM$. From every initial state, there is a sequence of coalitional deviations leading to a strong equilibrium. The sequence has length polynomial in the network size. In particular, every strategy profile with a Pareto-optimal clearing state is a strong equilibrium. This shows that the strong price of stability is 1 for many natural social welfare functions. This is a substantial extension of results~\cite{BertschingerHS20} for edge-ranking strategies with thresholds.

For non-fixed thresholds and minimal clearing, all payments have to be traced back to initial (external) assets. Our first observation in Section~\ref{subsec:min-clearing} is that these games might not admit a socially optimal strategy profile, even for the arguably most basic notion of utilitarian social welfare. Maybe surprisingly, we prove existence of strong equilibria in games with threshold-$\calM$ strategies and minimal clearing, for every suitable set $\calM$. The games are coalitional weakly acyclic if the strategies satisfy a property we call \emph{reduction consistency}. Then, from every initial state, there is a finite sequence of coalitional deviations leading to a strong equilibrium. The sequence has length polynomial in the network size. 
Non-existence of social optima results from a compact set of strategies and a continuity problem. If we instead restrict attention to finite strategy sets (finitely many thresholds and finite $\calM$), a social optimum trivially exists. However, we show that an optimal profile can be \classNP-hard to compute, even when a strong equilibrium can be obtained in polynomial time.

For fixed thresholds, we obtain a set of non-existence and \classNP-hardness results. They apply even when (1) some banks have fixed priorities in the form of partitions, (2) for every suitable strategy set $\calM$, and (3) for both maximal and minimal clearing. In particular, for games with fixed thresholds it can be \classNP-hard to compute an optimal strategy profile, there exist games without a Nash equilibrium, and deciding existence of a Nash equilibrium can be \classNP-hard. Notably, the results do \emph{not} follow directly from previous work in~\cite{BertschingerHS20,Kanellopoulos21} -- in previous constructions the decision about the correct \emph{priorities} is the main problem.

\paragraph{Related Work.}
%
We build upon the network model for systemic risk by Eisenberg and Noe~\cite{EisenbergN01} to study properties of clearing states for proportional payments. This model has received significant attention in the literature over the last 20 years~(see, e.g., \cite{ElsingerLS06,RogersV13,Fischer14} and references therein). Some recent additions to the literature include the effects of portfolio compression, when cyclic liabilities are removed from the network~\cite{Veraart20, SchuldenzuckerS20}.

The Eisenberg-Noe model has been extended, e.g., by augmenting the model with credit-default swaps (CDS)~\cite{SchuldenzuckerSB17, SchuldenzuckerSB20}. Computational aspectes of this model has attracted some interest recently, when banks can delete or add liabilities, donate to other banks, or change external assets~\cite{PappW20}. The model also can give rise to standard game-theoretic scenarios such as Prisoner's Dilemma. Moreover, there has been interest in debt swapping among banks as an operation to influence the clearing properties~\cite{PappW21ec}.  

We study seniorities and minimal clearing. For this model, an iterative algorithm that converges to a minimal clearing state is given in~\cite{Kusnetsov2018}. Similarly, decentralized processes that monotonically increase payments based on currently available funds converge to a minimal clearing state~\cite{CsokaH18}. Somewhat related is work on sequential defaulting in~\cite{PappW21}, where a clearing state is computed by sequentially updating the recovery rate of banks. However, this process can converge to different clearing states depending on the order of updates. Properties of the Eisenberg-Noe model with proportional payments can be extended to liabilities with different seniorities in the sense of priority orderings over debt contracts~\cite{Elsinger09}. Seniorities of this kind have been analyzed in several works, e.g., in~\cite{Diamond93,Fischer14,KrieterR21}. Notably, all works discussed so far assume a mechanical clearing process, with the vast majority focusing on proportional payments. 

Closely related to our work are recent game-theoretic approaches to clearing, where banks determine payments strategically. Financial network games without seniorities are proposed in~\cite{BertschingerHS20}, where the authors focus on ranking-based payment strategies, especially edge-ranking (priority order over debt contracts) and coin-ranking (priority order over single units of payment). Properties of clearing states in these games have also been analyzed (in a non-strategic context) in~\cite{CsokaH18}. More recent work~\cite{Kanellopoulos21} extended the results to strategies that consist of a mixture of edge-ranking and proportional payments, and to networks with CDS.

\section{Clearing Games} \label{sec:model}

\paragraph{Clearing Games.}
A clearing game is defined as follows. The financial network is modeled as a directed graph $G = (V,E)$. Each node $v \in V$ is a financial institution, which we call a \emph{bank}. Each directed edge $e = (u,v) \in E$ has an edge weight $c_e > 0$ and indicates that $u$ owes $v$ an amount of $c_e$. We denote the set of incoming and outgoing edges of $v$ by $E^-(v)=\{(u,v) \in E\}$ and $E^+(v)=\{(v,u) \in E\}$, respectively. We use $c^+_v = \sum_{e \in E^+(v)} c_e$ to denote the \emph{total liabilities}, i.e., the total amount of money owed by $v$ to other banks, and $c^-_v = \sum_{e \in E^+(v)} c_e$ to denote the total amount of money owed by other banks to $v$. Each bank $v$ has \emph{external assets} $b_v \geq 0$, which can be seen as an external inflow of money. To aid the study of computational complexity, we assume all numbers in the input, i.e., all $b_v$ and $c_e$, are integer numbers in binary encoding.

Each bank $v$ chooses a \emph{strategy} $\veca_v = (a_e)_{e \in E^+(v)}$, where $a_e : \RR \to [0,c_e]$ is a payment function for edge $e$. Given any amount $y_v \in [b_v, b_v + c^-_v]$ of total assets (external $b_v$ plus money incoming over $E^-$), bank $v$ allocates a payment $a_e(y_v)$ to edge $e$. We follow~\cite{BertschingerHS20} and assume the strategies fulfill several basic properties: (1) each $a_e(y_v)$ is monotone in $y_v$, (2) every bank spends all incoming money until all debt is paid, i.e., $\sum_{e \in E^+(v)} a_e(y_v) = \min\{y_v, c^+_v\}$, (3) no bank can generate additional money, i.e., $\sum_{e \in E^+(v)} a_e(y_v) \le y_v$.

Given a strategy profile $\veca = (\veca_v)_{v \in V}$ of the game, we assume a money flow on each edge emerges. The above conditions (1)-(3) give rise to a set of fixed-point conditions, which can be satisfied by several possible flows. More formally, a \emph{feasible flow} $\vecf = (f_e)_{e \in E}$ is such that $f_{(u,v)} = a_{(u,v)}(f_v)$ for every $(u,v) \in E$, where $f_v = b_v + \sum_{e \in E^-(v)} f_e$. The set of feasible flows for a strategy profile $\veca$ forms a complete lattice~\cite{BertschingerHS20,CsokaH18}. The \emph{clearing state} of $\veca$ is the feasible flow realized in strategy profile $\veca$. We are interested in games, where the clearing state is either the supremum (termed \emph{max-clearing}) or the infimum (\emph{min-clearing}).

Given a strategy profile $\veca$ and the resulting clearing state $\vecf$, we assume that each bank $v$ wants to maximize the total assets, i.e., the utility is $u_v(\veca,\vecf) = u_v(\vecf) = b_v + \sum_{e \in E^-(v)} f_e$. Equivalently, each bank strives to maximize its equity, given by total assets minus total liabilities $b_v + \sum_{e \in E^-(v)} f_e - c^+_v$.

Previous work has focused on priority-based strategies such as coin- or edge ranking~\cite{BertschingerHS20} or a mix of edge-ranking and proportional payments~\cite{Kanellopoulos21}. In an \emph{edge-ranking} strategy, each bank chooses a permutation $\pi_v$ over all edges in $E^+(v)$. Money is used to pay towards edges in the order of $\pi_v$. Formally, number the edges of $E^+(v)$ such that $e_1 \succ_{\pi_v} e_2 \succ_{\pi_v} \ldots \succ_{\pi_v} e_{|E^+(v)|}$. $v$ pays to $e_1$ until the payment is $c_{e_1}$, then to $e_2$ until the payment is $c_{e_2}$, then \ldots, or stops earlier if it runs out of funds. 

\paragraph{Thresholds.}
Strategies with thresholds have been briefly touched upon in~\cite{BertschingerHS20}. A threshold $\tau_e$ for edge $e$ splits edge $e$ into a high-priority installment $e^h$ and a low-priority one $e^l$. The weights for these parts are $c_{e^h}=\tau_e$ and $c_{e^l}= c_e - \tau_e$. In the following, we assume that the two parts are each represented by an auxiliary edge where $e^{(1)}$ has weight $c_{e^{(1)}}=\tau_e$ and $e^{(2)}$ has weight $c_{e^{(2)}}=c_e-\tau_e$. Bank $v$ then first tries to pay the auxiliary edge $e^{(1)}$ representing the high-priority installment, for all edges $e$ in $E^+(v)$, according to some monotone strategy. Once these are fully paid, the low-priority auxiliary edges $e^{(2)}$ are paid according to some (potentially different) monotone strategy with the remaining funds.

Towards multiple thresholds $\tau_e^{(i)}$ per edge, bank $v$ first pays auxiliary edge $e^{(1)}$ of all edges in $E^+(v)$ using a monotone strategy $\veca_v^{(1)}$. Once these are fully paid, it uses a monotone strategy $\veca_v^{(2)}$ to pay the auxiliary edges $e^{(2)}$, and so on, until all auxiliary edges are paid or $v$ runs out of funds. By assumption, at least one suitable 
choice exists for $\veca^{(i)}_v$, for every possible threshold vector.

Let us formally define general classes of threshold-based strategies. A \emph{suitable} set $\calM$ of monotone strategies contains, for every vector $\vectau_v = (\tau_e)_{e \in E^+(v)}$ of thresholds with $\tau_e \in [0,c_e]$, at least one strategy $\veca \in \calM$ with $a_e(T_v) = \tau_e$ for every $e \in E^+(v)$, where $T_v = \sum_{e \in E^+(v)} \tau_e$. For any integer $k \ge 2$, a \emph{threshold-$\calM$} strategy $(\vectau_v^{(1)},\ldots,\vectau_v^{(k-1)}, \veca^{(1)}_v, \ldots, \veca^{(k)}_v)$ is given by (1) $k-1$ vectors of thresholds $\vectau^{(i)}_v = (\tau^{(i)}_e)_{e \in E^+(v)}$ with $\tau^{(i)}_e \in [0,c_e]$ and $\tau^{(i)}_e \ge \tau^{(i-1)}_e$, for all $i=1,\ldots,k-1$ and all $e \in E^+(v)$, and (2) $k$ monotone strategies $\veca^{(i)}_v \in \calM$, for $i = 1,\ldots,k$. Thresholds $\tau^{(i)}_e$ split edge $e$ into $k$ auxiliary edges $e^{(i)}$ with $c_{e^{(i)}} = \tau^{(i)}_e - \tau^{(i-1)}_e$, for $i = 1,\ldots,k$, where we assume $\tau^{(0)}_e = 0$ and $\tau^{(k)}_e = c_e$. 


Formally, let $T_v^{(i)} = \sum_{e \in E^+(v)} \tau_e^{(i)}$ , i.e., the sum of weights of all edges $e^{(1)}$ to $e^{(i)}$, and $t_v^{(i)} = \sum_{e \in E^+(v)} c_{e^{(i)}} = T_v^{(i)} - T_v^{(i-1)}$ , i.e., the sum of weights of all $e^{(i)}$
The monotone strategy $\veca_v^{(i)} \in \calM$ satisfies $a^{(i)}_e(0) = 0$ and $a^{(i)}_e(t_v^{(i)}) = \tau_e^{(i)} - \tau_e^{(i-1)}$ $=c_{e^{(i)}}$ for all $e \in E^+(v)$. The total payment to edge $e$ in the threshold-$\calM$ strategy is then $a_e(x) = \sum_{i=1}^k a_e^{(i)}(\min\{x - T_v^{(i-1)}, t_v^{(i)}\}) = \tau_e^{(i')} + a_e^{(i'+1)}(x - T_v^{(i')})$, where $i' = \arg \max_{i < k} \{T_v^{i} \le x\}$.

\paragraph{Examples.}
We discuss several natural examples for $\calM$. A \emph{threshold edge-ranking} strategy $\vectr_v = ((\vectau_v^{(i)})_{i \in [k-1]}, (\pi_v^{(j)})_{j \in [k]})$ is a threshold-$\calM$ strategy, where $\calM$ encompasses all edge-ranking strategies. Here $\pi_v^{(j)}$ are permutations over $e^{(j)}$ for all edges $e$ in $E^+(v)$. Bank $v$ pays first towards each auxiliary edge $e^{(1)}$, in the order of the ranking $\pi_v^{(1)}$, until the payment to the edge reaches $c_{e^{(1)}}$ 
(or it runs out of funds). Once it pays $c_{e^{(1)}}$ 
to every edge, it uses additional money to increase the payment on $e^{(2)}$ to $c_{e^{(2)}}$ 
, in the order of $\pi_v^{(2)}$ (or stops upon running out of funds), and so on. 

Threshold edge-ranking strategies are a significant generalization of edge-ranking strategies. In particular, we show below that clearing games with threshold edge-ranking strategies always have strong equilibria, while for edge-ranking strategies even pure Nash equilibria can be absent~\cite{BertschingerHS20}.


In a \emph{threshold proportional} strategy $\vectp_v = ((\vectau_v^{(i)})_{i \in [k-1]}, (\veca_v^{(j)})_{j \in [k]})$, the set $\calM$ encompasses all proportional payment strategies. Bank $v$ pays the marginal debt on auxiliary edges $e^{(i)}$ of class $i$ using proportional payments in this class:
\[
    a_e^{(j)}(x) = x \cdot \frac{c_{e^{(j)}}}{\sum_{e \in E^+(v)} c_{e^{(j)}}} = x \cdot \frac{\tau_e^{(j)} - \tau_e^{(j-1)}}{t_v^{(j)}}\enspace,
\]
in the order of $j=1,\ldots,k$. Fixing any set of thresholds $(\vectau_v^{(i)})_{i \in [k-1]}$, proportional strategies $\veca_v^{(j)} \in \calM$ are uniquely determined for every $j=1,\ldots,k$ (while for edge-ranking strategies there are $(E^+(v))!$ many choices for each $\veca_v^{(j)}$).

\paragraph{Thresholds and Partitions.}
An interesting special case of a threshold-$\calM$ strategy is a \emph{partition-$\calM$} strategy, in which all thresholds are $\tau_e^{(i)} \in \{0,c_e\}$. Then each auxiliary edge 
has weight $c_{e^{(i)}} \in \{0,c_e\}$. The result is a partition ranking over the contracts, where for partition $E^{(i)}$ of priority $i$ we make payments based on the strategy $a_v^{(i)}$. Conversely, every threshold-${\calM}$ strategy $(\vectau_v^{(1)},\ldots,\vectau_v^{(k-1)},$ $\veca^{(1)}_v, \ldots, \veca^{(k)}_v)$ can be interpreted as a partition-$\calM$ strategy over auxiliary edges $e^{(i)}$ for all $e \in E^+(v)$.
\begin{observation}
    \label{obs:thresh-to-part}
    Every threshold-${\calM}$ strategy can be interpreted as a partition-$\calM$ strategy where each edge $e \in E^+{(v)}$ is replaced by $k$ auxiliary edges.
\end{observation}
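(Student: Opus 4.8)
The plan is to turn the informal claim into a precise network-and-strategy transformation and then check that it preserves the lattice of feasible flows, hence clearing states and utilities.

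First I would describe the refined network. Fix a bank $v$ and a threshold-$\calM$ strategy $(\vectau_v^{(1)},\dots,\vectau_v^{(k-1)},\veca^{(1)}_v,\dots,\veca^{(k)}_v)$. For each edge $e=(v,w)\in E^+(v)$, replace $e$ by $k$ parallel auxiliary edges $e^{(1)},\dots,e^{(k)}$, each directed from $v$ to $w$, with $c_{e^{(i)}}=\tau^{(i)}_e-\tau^{(i-1)}_e$ (using $\tau^{(0)}_e=0$, $\tau^{(k)}_e=c_e$). Performing this for every bank yields a network $G'$; note $\sum_{i=1}^k c_{e^{(i)}}=c_e$, external assets are unchanged, and weight-zero copies are harmless (or may simply be removed). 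On $G'$ I define a partition-$\calM$ strategy for $v$: place all level-$i$ auxiliary edges $\{e^{(i)}:e\in E^+(v)\}$ in partition class $i$ and pay class $i$ using $\veca^{(i)}_v$. Formally this is the threshold-$\calM$ strategy on $G'$ whose threshold for edge $e^{(i)}$ equals $0$ at levels $j<i$ and $c_{e^{(i)}}$ at levels $j\ge i$; all these thresholds lie in $\{0,c_{e^{(i)}}\}$, so it is a partition-$\calM$ strategy. Since $\sum_{e\in E^+(v)}c_{e^{(i)}}=t_v^{(i)}$ and $\veca^{(i)}_v\in\calM$ satisfies $a^{(i)}_e(t_v^{(i)})=c_{e^{(i)}}$, this is a well-defined strategy of the required form and $\calM$ is suitable for each class.

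Second — and this is the only real content — I would establish the correspondence between feasible flows. Map a flow $\vecf'$ on $G'$ to the flow $\vecf$ on $G$ defined by $f_e=\sum_{i=1}^k f'_{e^{(i)}}$; since all copies $e^{(i)}$ share the endpoints of $e$, node assets agree, $f_v=f'_v$ for every $v$. Using exactly the identity from the model definition, $a_e(x)=\sum_{i=1}^k a^{(i)}_e(\min\{x-T_v^{(i-1)},t_v^{(i)}\})$, one checks that the partition-$\calM$ strategy on $G'$ routes, aggregated over the $k$ copies of $e$, precisely the amount $a_e(f_v)$ that the threshold-$\calM$ strategy routes on $e$ in $G$. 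Hence $\vecf'$ is feasible in $G'$ under the partition strategies if and only if $\vecf$ is feasible in $G$ under the threshold strategies, and the bijection preserves all node assets, so it is order-preserving in both directions. The two complete lattices of feasible flows are therefore isomorphic; in particular their infima and suprema correspond, so the min-clearing (resp.\ max-clearing) state of $G$ maps to that of $G'$.

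Finally, utilities are preserved, since $u_v(\vecf)=b_v+\sum_{e\in E^-(v)}f_e=b_v+\sum_{e\in E^-(v)}\sum_{i=1}^k f'_{e^{(i)}}=u_v(\vecf')$, and applying the same construction to a single deviating bank (or a deviating coalition) shows that strategy spaces correspond as well. Thus the clearing game with threshold-$\calM$ strategies on $G$ and the clearing game with partition-$\calM$ strategies on $G'$ are the same game up to this isomorphism, which is what the observation asserts. I expect the main obstacle to be the bookkeeping in the second step: verifying that feasibility transfers in both directions and that the flow bijection is genuinely a lattice isomorphism rather than merely a bijection; the rest is a direct unwinding of the definitions of threshold-$\calM$ and partition-$\calM$ strategies.
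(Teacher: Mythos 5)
Your proposal is correct and follows essentially the same route as the paper, which treats the observation as immediate from its definition of thresholds via auxiliary edges $e^{(i)}$ with weights $c_{e^{(i)}} = \tau_e^{(i)} - \tau_e^{(i-1)}$, grouped so that all level-$i$ copies form priority class $i$ and are paid by $\veca_v^{(i)}$. Your extra verification that the feasible-flow lattices of the two games are isomorphic (so clearing states and utilities correspond) is precisely the content the paper leaves implicit but relies on when it invokes the observation in Propositions~\ref{prop:maxClear} and~\ref{prop:minClear}.
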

\begin{figure}[ht!]
	\begin{subfigure}[t]{\textwidth}
    \centering
    \resizebox{0.14 \textheight}{!}{
	\begin{tikzpicture}
		\def\r{1}
		\node[circle, draw=black, inner sep=2.5pt] (1) at (210:\r) {$v_1$};
		\node[circle, draw=black, inner sep=2.5pt] (2) at (330:\r) {$v_2$};
		\node[circle, draw=black, inner sep=2.5pt] (3) at (90:\r) {$v_3$};
		\node[circle, draw=black, inner sep=2.5pt] (4) [below = of 1] {$v_4$};
		\node[circle, draw=black, inner sep=2.5pt] (5) [below = of 2] {$v_5$};
		\node[rectangle, draw=black, inner sep=2.5pt] (bv) [left=0.1cm of 1] {$1$};
		\draw[->] (1) -- node[above, midway] {\small $1$} (2);
		\draw[->] (1) -- node[left, midway] {\small $1$} (4);
		\draw[->] (2) -- node[right, midway] {\small $1$} (3);
		\draw[->] (2) -- node[right, midway] {\small $1$} (5);
		\draw[->] (3) -- node[left, midway] {\small $1$} (1);
	\end{tikzpicture}
	}
	\caption{Game of Example \ref{ex:intro}. $v_1$ has external assets of 1, all edges have a weight of 1.}
	\label{fig:intro-example-game}
    \end{subfigure}
    
    \vspace{0.1cm} 
    
	\begin{subfigure}[t]{\textwidth}
	\centering
	\resizebox{\h}{!}{
	\begin{tikzpicture}
		\def\r{1}
		\node[circle, draw=black, inner sep=2.5pt] (1) at (210:\r) {$v_1$};
		\node[circle, draw=black, inner sep=2.5pt] (2) at (330:\r) {$v_2$};
		\node[circle, draw=black, inner sep=2.5pt] (3) at (90:\r) {$v_3$};
		\node[circle, draw=black, inner sep=2.5pt] (4) [below = of 1] {$v_4$};
		\node[circle, draw=black, inner sep=2.5pt] (5) [below = of 2] {$v_5$};
		\node[rectangle, draw=black, inner sep=2.5pt] (bv) [left=0.1cm of 1] {$1$};
		\draw[-stealth',line width=1.5pt] (1) -- node[above, midway] {\small $1$} (2);
		\draw[->] (1) -- node[left, midway] {\small $1$} (4);
		\draw[-stealth',line width=1.5pt] (2) -- node[right, midway] {\small $1$} (3);
		\draw[->, gray] (2) -- (5);
		\draw[->] (3) -- node[left, midway] {\small $1$} (1);
	\end{tikzpicture}
	}
	\hspace{0.25cm}
	\resizebox{\h}{!}{
	\begin{tikzpicture}
		\def\r{1}
		\node[circle, draw=black, inner sep=2.5pt] (1) at (210:\r) {$v_1$};
		\node[circle, draw=black, inner sep=2.5pt] (2) at (330:\r) {$v_2$};
		\node[circle, draw=black, inner sep=2.5pt] (3) at (90:\r) {$v_3$};
		\node[circle, draw=black, inner sep=2.5pt] (4) [below = of 1] {$v_4$};
		\node[circle, draw=black, inner sep=2.5pt] (5) [below = of 2] {$v_5$};
		\node[rectangle, draw=black, inner sep=2.5pt] (bv) [left=0.1cm of 1] {$1$};
		\draw[->] (1) -- node[above, midway] {\small $1$} (2);
		\draw[-stealth',line width=1.5pt] (1) -- node[left, midway] {\small $1$} (4);
		\draw[-stealth',line width=1.5pt] (2) -- node[right, midway] {\small $1$} (3);
		\draw[->, gray] (2) -- (5);
		\draw[->] (3) -- node[left, midway] {\small $1$} (1);
	\end{tikzpicture}
	}
	\hspace{0.25cm}
	%
	\resizebox{\h}{!}{
	\begin{tikzpicture}
		\def\r{1}
		\node[circle, draw=black, inner sep=2.5pt] (1) at (210:\r) {$v_1$};
		\node[circle, draw=black, inner sep=2.5pt] (2) at (330:\r) {$v_2$};
		\node[circle, draw=black, inner sep=2.5pt] (3) at (90:\r) {$v_3$};
		\node[circle, draw=black, inner sep=2.5pt] (4) [below = of 1] {$v_4$};
		\node[circle, draw=black, inner sep=2.5pt] (5) [below = of 2] {$v_5$};
		\node[rectangle, draw=black, inner sep=2.5pt] (bv) [left=0.1cm of 1] {$1$};
		\draw[-stealth',line width=1.5pt] (1) -- node[above, midway] {\small $1$} (2);
		\draw[->, gray] (1) -- (4);
		\draw[->, gray] (2) -- (3);
		\draw[-stealth',line width=1.5pt] (2) -- node[right, midway] {\small $1$} (5);
		\draw[->, gray] (3) -- (1);
	\end{tikzpicture}
	}
	\hspace{0.25cm}
	\resizebox{\h}{!}{
	\begin{tikzpicture}
		\def\r{1}
		\node[circle, draw=black, inner sep=2.5pt] (1) at (210:\r) {$v_1$};
		\node[circle, draw=black, inner sep=2.5pt] (2) at (330:\r) {$v_2$};
		\node[circle, draw=black, inner sep=2.5pt] (3) at (90:\r) {$v_3$};
		\node[circle, draw=black, inner sep=2.5pt] (4) [below = of 1] {$v_4$};
		\node[circle, draw=black, inner sep=2.5pt] (5) [below = of 2] {$v_5$};
		\node[rectangle, draw=black, inner sep=2.5pt] (bv) [left=0.1cm of 1] {$1$};
		\draw[->, gray] (1) -- (2);
		\draw[-stealth',line width=1.5pt] (1) -- node[left, midway] {\small $1$} (4);
		\draw[->, gray] (2) -- (3);
		\draw[-stealth',line width=1.5pt] (2) -- (5);
		\draw[->, gray] (3) -- (1);
	\end{tikzpicture}
	}
	\caption{Maximal clearing states}
	\label{fig:intro-example-max}
	\end{subfigure}
	
    \vspace{0.1cm} 

	\begin{subfigure}[t]{\textwidth}
	\centering
	\resizebox{\h}{!}{
	\begin{tikzpicture}
		\def\r{1}
		\node[circle, draw=black, inner sep=2.5pt] (1) at (210:\r) {$v_1$};
		\node[circle, draw=black, inner sep=2.5pt] (2) at (330:\r) {$v_2$};
		\node[circle, draw=black, inner sep=2.5pt] (3) at (90:\r) {$v_3$};
		\node[circle, draw=black, inner sep=2.5pt] (4) [below = of 1] {$v_4$};
		\node[circle, draw=black, inner sep=2.5pt] (5) [below = of 2] {$v_5$};
		\node[rectangle, draw=black, inner sep=2.5pt] (bv) [left=0.1cm of 1] {$1$};
		\draw[-stealth',line width=1.5pt] (1) -- node[above, midway] {\small $1$} (2);
		\draw[->] (1) -- node[left, midway] {\small $1$} (4);
		\draw[-stealth',line width=1.5pt] (2) -- node[right, midway] {\small $1$} (3);
		\draw[->, gray] (2) -- (5);
		\draw[->] (3) -- node[left, midway] {\small $1$} (1);
	\end{tikzpicture}
	}
	\hspace{0.25cm}
	\resizebox{\h}{!}{
	\begin{tikzpicture}
		\def\r{1}
		\node[circle, draw=black, inner sep=2.5pt] (1) at (210:\r) {$v_1$};
		\node[circle, draw=black, inner sep=2.5pt] (2) at (330:\r) {$v_2$};
		\node[circle, draw=black, inner sep=2.5pt] (3) at (90:\r) {$v_3$};
		\node[circle, draw=black, inner sep=2.5pt] (4) [below = of 1] {$v_4$};
		\node[circle, draw=black, inner sep=2.5pt] (5) [below = of 2] {$v_5$};
		\node[rectangle, draw=black, inner sep=2.5pt] (bv) [left=0.1cm of 1] {$1$};
		\draw[->, gray] (1) -- (2);
		\draw[-stealth',line width=1.5pt] (1) -- node[left, midway] {\small $1$}(4);
		\draw[-stealth',line width=1.5pt] (2) -- (3);
		\draw[->, gray] (2) -- (5);
		\draw[->, gray] (3) -- (1);
	\end{tikzpicture}
	}
	\hspace{0.25cm}
    %
	\resizebox{\h}{!}{
	\begin{tikzpicture}
		\def\r{1}
		\node[circle, draw=black, inner sep=2.5pt] (1) at (210:\r) {$v_1$};
		\node[circle, draw=black, inner sep=2.5pt] (2) at (330:\r) {$v_2$};
		\node[circle, draw=black, inner sep=2.5pt] (3) at (90:\r) {$v_3$};
		\node[circle, draw=black, inner sep=2.5pt] (4) [below = of 1] {$v_4$};
		\node[circle, draw=black, inner sep=2.5pt] (5) [below = of 2] {$v_5$};
		\node[rectangle, draw=black, inner sep=2.5pt] (bv) [left=0.1cm of 1] {$1$};
		\draw[-stealth',line width=1.5pt] (1) -- node[above, midway] {\small $1$} (2);
		\draw[->, gray] (1) -- (4);
		\draw[->, gray] (2) -- (3);
		\draw[-stealth',line width=1.5pt] (2) -- node[right, midway] {\small $1$} (5);
		\draw[->, gray] (3) -- (1);
	\end{tikzpicture}
	}
	\hspace{0.25cm}
	\resizebox{\h}{!}{
	\begin{tikzpicture}
		\def\r{1}
		\node[circle, draw=black, inner sep=2.5pt] (1) at (210:\r) {$v_1$};
		\node[circle, draw=black, inner sep=2.5pt] (2) at (330:\r) {$v_2$};
		\node[circle, draw=black, inner sep=2.5pt] (3) at (90:\r) {$v_3$};
		\node[circle, draw=black, inner sep=2.5pt] (4) [below = of 1] {$v_4$};
		\node[circle, draw=black, inner sep=2.5pt] (5) [below = of 2] {$v_5$};
		\node[rectangle, draw=black, inner sep=2.5pt] (bv) [left=0.1cm of 1] {$1$};
		\draw[->, gray] (1) -- (2);
		\draw[-stealth',line width=1.5pt] (1) -- node[left, midway] {\small $1$} (4);
		\draw[->, gray] (2) -- (3);
		\draw[-stealth',line width=1.5pt] (2) -- (5);
		\draw[->, gray] (3) -- (1);
	\end{tikzpicture}
	}
	\caption{Minimal clearing states}
	\label{fig:intro-example-min}
	\end{subfigure}
	\caption{Image (a) depicts a clearing game whereas images in (b) and (c) represent the maximal and minimal clearing states resulting from all four edge-ranking strategy profiles. Edge labels denote flows over the edge, there is no flow on edges without labels. Thick edges indicate the preferred outgoing edge in the strategy of the node. \label{fig:intro-example}}
\end{figure}
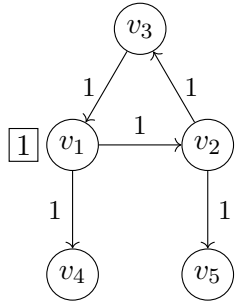
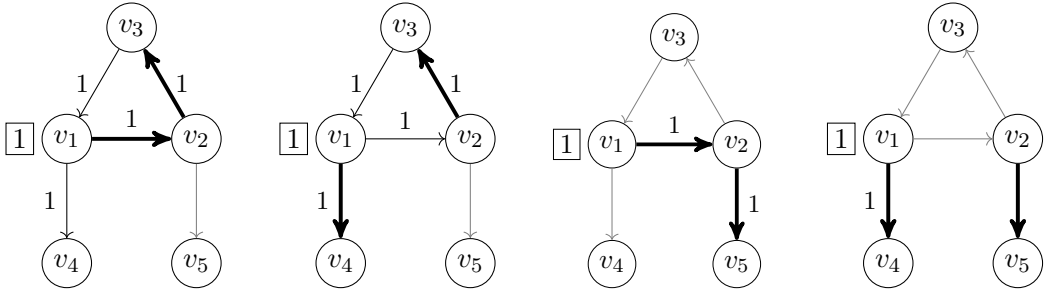
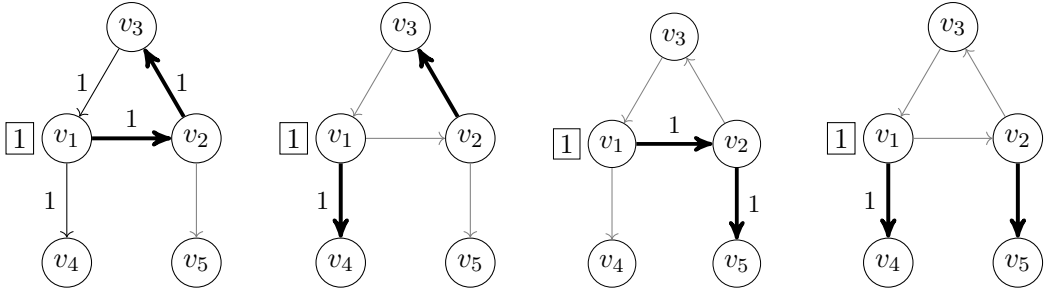

Let us illustrate the relationship between strategy profiles and the associated clearing states. For simplicity, we assume edge-ranking strategy profiles (where all thresholds $\tau_e = c_e$). The following example builds an intuition for the differences between minimal and maximal clearing states of a given strategy profile. Further, it illustrates the great benefit that the players derive from collaboration.

\begin{example}\label{ex:intro}\rm
Consider the game depicted in Fig.\ \ref{fig:intro-example-game}. Note that the behavior of players $v_3, v_4$ and $v_5$ is fully predetermined, since they either have no liabilities or pay all assets towards their single outgoing edge. Hence, it is sufficient to focus on the (non-trivial) strategy choices of players $v_1$ and $v_2$. In Fig.~\ref{fig:intro-example-max} and~\ref{fig:intro-example-min} we show the resulting maximal and minimal clearing states for all four strategy profiles, depending on the edge rankings chosen by $v_1$ and $v_2$. In the following, we discuss the clearing states for two profiles. 

Consider strategy profile $\veca$ (depicted as the second from the left in Fig.\ \ref{fig:intro-example}), where player $v_1$ first fully pays off edge $(v_1,v_4)$ and afterwards $(v_1, v_2)$, whereas player $v_2$ completely pays off edge $(v_2, v_3)$ before paying edge $(v_2, v_5)$. Formally, the strategies are $\pi^{(1)}_{v_1} = ((v_1,v_4), (v_1,v_2))$ and $\pi^{(1)}_{v_2} = ((v_2,v_3), (v_2,v_5))$. Each player $v_i$, for $i \in \{1,2\}$, directs all payments towards her highest-ranked edge until all debt is settled. All remaining money is used to pay off the other edge.  
Let us first consider the resulting minimal clearing state. Player $v_1$ holds external assets of 1 and is obligated to spend these. By her strategy choice, she just pays off her debt of 1 towards $v_4$. The resulting payments satisfy all fixed-point conditions and represent the minimal clearing state for $\veca$. To construct the maximal clearing state, we examine if there are higher feasible payments. Note that players $v_1, v_2$ and $v_3$ with edges $(v_1,v_2), (v_2,v_3)$ and $(v_3,v_1)$ form a cycle in terms of the (most preferred) payments. Suppose the payment along the cycle is increased by 1. The fixed-point conditions remain satisfied. It is easy to see that no larger feasible payments can exist. The maximal clearing state is attained.  

Now consider the second strategy profile $\veca'$ (depicted as the third from the left in Fig.\ \ref{fig:intro-example}), where the order of payments is reversed for both players. Here the strategies are $\pi^{(1)}_{v_1} = ((v_1,v_2), (v_1,v_4))$ and $\pi^{(1)}_{v_2} = ((v_2,v_5), (v_2,v_3))$.
In this case, $v_1$ pays off her liabilities towards $v_2$ using her external assets, whereupon $v_2$ passes the incoming payments on to the sink $v_5$. Again, the resulting payments are feasible and form the minimal clearing state. In contrast to $\veca$, for $\veca'$ the clearing state is unique, i.e., the minimal clearing state equals the maximal clearing state. To observe this, consider the cycle including $v_1, v_2$ and $v_3$. The edge $(v_1,v_2)$ is already fully saturated and, hence, payments along the cycle cannot by increased. Instead, player $v_1$ would pass all additional incoming payments to $v_4$, which excludes the possibility of a feasible increase of payments.
\hfill $\blacksquare$
\end{example}

\section{Clearing Games with Seniorities}

\subsection{Max-Clearing} \label{subsec:max-clearing}

In this section, we consider computational problems in clearing games with max-clearing. Subclasses of our games have been studied in the literature before. In particular, for edge-ranking strategies, non-existence of Nash equilibrium and \classNP-hardness of deciding equilibrium existence follow directly from~\cite{BertschingerHS20}. Similar results can be shown for proportional strategies when thresholds are $\tau_e^{(i)} \in \{0,c_e\}$, i.e., for \emph{partition} proportional strategies considered in~\cite{Kanellopoulos21}. We instead focus on threshold variants and start with an observation about the clearing state.
\begin{proposition}
    \label{prop:maxClear}
    A maximum clearing state can be computed in polynomial time for games with threshold edge-ranking or threshold proportional strategies.
\end{proposition}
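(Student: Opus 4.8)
The plan is to reduce the computation of a maximum clearing state to a standard Eisenberg--Noe-style fixed-point computation on an enlarged network, where the enlargement is exactly the auxiliary-edge construction from Observation~\ref{obs:thresh-to-part}. First I would replace each edge $e \in E^+(v)$ by its $k$ auxiliary edges $e^{(1)}, \ldots, e^{(k)}$ with weights $c_{e^{(i)}} = \tau_e^{(i)} - \tau_e^{(i-1)}$, so that a threshold-$\calM$ strategy becomes a partition-$\calM$ strategy with priority classes $E^{(1)}, \ldots, E^{(k)}$. On this enlarged network, the payment rule of each bank is: saturate all class-$1$ auxiliary edges (according to $\veca_v^{(1)}$), then class-$2$, and so on. Note the enlarged network has size polynomial in the input (at most $k|E|$ edges, and $k$ is part of the strategy description), so it suffices to compute the maximal feasible flow there in polynomial time.

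Next I would exploit the structure of the two concrete strategy families. For \emph{threshold proportional} strategies, the payment function $a_e(x)$ is piecewise linear and monotone, and within each priority class the split is the fixed proportional rule; the whole map $\vecf \mapsto (a_e(f_v))_e$ is a monotone, piecewise-linear self-map of the flow lattice, and its maximal fixed point can be found by the classical approach of iteratively identifying the set of "defaulting" banks (those that cannot fully pay $c^+_v$) — this is the Eisenberg--Noe fixed-point / \textsc{GreatestClearingVector} computation, which runs in polynomial time via a sequence of linear systems (one per round, and there are at most $|V|$ rounds since the default set only grows). The thresholds only change which linear combination is active on each edge as a function of $f_v$, but for a fixed guess of the default set and of the "active class" at each defaulting bank the system is still linear, so the same round-elimination argument goes through. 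For \emph{threshold edge-ranking}, the cleanest route is to start from the trivial lower bound $\vecf = \mathbf{0}$ and run the decentralized/fictitious-default process: repeatedly pick any bank whose current incoming assets $f_v = b_v + \sum_{e \in E^-(v)} f_e$ exceed what it is currently paying out, and push its payments up to the level prescribed by its threshold edge-ranking strategy at assets $f_v$; since all $a_e$ are monotone and the feasible flows form a complete lattice, this process is monotone increasing and, because every step increases some payment by a strictly positive amount and all data are rational with denominators bounded by the input, it converges to the supremum in polynomially many steps (here I would invoke the known bound that such monotone clearing iterations on edge-ranking strategies stabilize in a polynomial number of updates, as established for the non-threshold case in~\cite{BertschingerHS20,CsokaH18}, which carries over verbatim to the auxiliary-edge network).

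The main obstacle I anticipate is the termination/bit-complexity bookkeeping: one must argue that the iterative monotone process (or the round-by-round linear-system solver) reaches the \emph{exact} supremum in a polynomial number of steps and with polynomially-bounded rationals, rather than merely converging to it in the limit. For the proportional case this is handled by the linear-algebra structure — at most $|V|$ elimination rounds, each solving a linear system whose solution has polynomial bit-length by Cramer's rule; for the edge-ranking case one argues that in each "event" either a new edge becomes saturated or a new bank becomes fully paid, bounding the number of events by $O(k|E|)$, and between events the updates are determined by solving a linear flow-propagation system along the currently-active edges. A secondary, more conceptual point to verify carefully is that the maximal \emph{feasible flow} on the auxiliary-edge network projects back to the maximal clearing state of the original game — this follows because the correspondence between threshold-$\calM$ strategies and partition-$\calM$ strategies in Observation~\ref{obs:thresh-to-part} is a bijection that preserves the set of feasible flows edge-by-edge (the payment on $e$ is just the sum of payments on $e^{(1)}, \ldots, e^{(k)}$), hence preserves the complete-lattice structure and its supremum. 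Once these two points are nailed down, the proposition follows.
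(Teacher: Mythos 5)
Your first step --- passing to the auxiliary-edge network via Observation~\ref{obs:thresh-to-part} and treating the threshold strategy as a partition (indeed, fully refined edge-ranking, resp.\ partition proportional) strategy there --- is exactly the paper's argument, and your remark that the correspondence preserves feasible flows edge-by-edge, hence the lattice and its supremum, is the right thing to check. The proportional half of your plan (top-down elimination of defaulting banks, one linear system per round) is also in the spirit of what the paper does by citing the algorithm of~\cite{Kanellopoulos21}.

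The edge-ranking half, however, contains a genuine error. Starting from $\vecf=\mathbf{0}$ and monotonically pushing each bank's payments up to the level prescribed by its strategy at its current assets is a fixed-point iteration from the \emph{bottom} of the lattice of feasible flows; it converges to the infimum, i.e., the \emph{minimal} clearing state, not the supremum. The paper relies on precisely this fact in Proposition~\ref{prop:minClear}, where the same bottom-up process (the algorithm of~\cite{BertschingerHS20} stopped after resolving only the ``necessary cycles'') is used to compute the minimal clearing state. Concretely, in Example~\ref{ex:intro} with $\pi^{(1)}_{v_1}=((v_1,v_4),(v_1,v_2))$ and $\pi^{(1)}_{v_2}=((v_2,v_3),(v_2,v_5))$, your process sends the single unit of external assets to $v_4$ and then halts, missing the unit of circulation on the cycle $v_1\to v_2\to v_3\to v_1$ that the maximal clearing state carries. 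To reach the maximal clearing state you must additionally detect and saturate such ``optional cycles'' --- cycles of non-saturated, currently prioritized edges along which flow can be circulated consistently with the strategies even though no external assets initiate it; this is exactly the part of the algorithm in~\cite{BertschingerHS20} that your sketch omits. With that addition (and your bookkeeping that each event saturates an auxiliary edge or fully pays a bank, giving an $O(k|E|)$ bound on the number of events), the argument closes.
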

\begin{proof}
    We use Observation~\ref{obs:thresh-to-part}. For any threshold edge-ranking strategy, Observation~\ref{obs:thresh-to-part} leads to an equivalent partition edge-ranking strategy, i.e., a priority partition of auxiliary edges. Within each partition, the edge-ranking strategy refines the ranking to a complete ranking over all auxiliary edges. Hence, a threshold edge-ranking strategy is a regular edge-ranking strategy over auxiliary edges. This allows to compute a maximal clearing state using the algorithm in~\cite{BertschingerHS20}. 
    
    A similar observation holds for threshold proportional strategies. A threshold proportional strategy is a partition proportional strategy over auxiliary edges. Hence, the algorithm in~\cite{Kanellopoulos21} can be used to compute a maximal clearing state. 
\end{proof}

Given a strategy profile $\veca$ and the corresponding clearing state $\vecf$, consider a set $C \subseteq V$ of players and a deviation $\veca'_C = (\veca'_v)_{v \in C}$. Let $\veca' = (\veca'_C, \veca_{-C})$ be the resulting strategy profile after deviation, and let $\vecf'$ be the clearing state. The pair $(\veca, \veca')$ is a \emph{coalitional improvement step} if $u_v(\veca', \vecf') > u_v(\veca,\vecf)$ for all $v\in C$, i.e., every player strictly improves her utility upon deviation.
Furthermore, a strategy profile $\veca$ with corresponding clearing state $\vecf$ forms a \emph{strong equilibrium}, if there exists \emph{no coalitional improvement step} $(\veca, \veca')$, for any coalition $C \subseteq V$.

For the main result in this section, we show that for every game with threshold-${\calM}$ strategies and every initial strategy profile, there is a sequence of coalitional improvement steps ending in a strong equilibrium. We term this property \emph{coalitional weakly acyclic}.

\begin{theorem}
    \label{thm:max-clearing-SE}
    Every max-clearing game with threshold-${\calM}$ strategies is coalitional weakly acyclic. For every initial strategy profile, there is a sequence of coalitional improvement steps that ends in a strong equilibrium. The sequence requires a number of steps that is polynomial in $|E|$.
\end{theorem}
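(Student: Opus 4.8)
The plan is to show that from any starting profile we can reach a strong equilibrium by a short sequence of coalitional improvement steps. The key quantity to track is the total cleared debt, i.e. the total flow $\sum_{e \in E} f_e$ in the max-clearing state. First I would observe, via Observation~\ref{obs:thresh-to-part}, that it suffices to work with partition-$\calM$ strategies over the auxiliary edges; this means we may assume each edge belongs to a single priority class, which simplifies the bookkeeping. The central structural fact I would try to establish is a \emph{monotonicity lemma}: given the current max-clearing flow $\vecf$ for profile $\veca$, define for each bank $v$ its realized assets $f_v = b_v + \sum_{e \in E^-(v)} f_e$. Call an edge $e=(v,u)$ \emph{slack} if $v$ could route more money along it, i.e. $f_e < c_e$ and $v$'s total outflow is $\min\{f_v, c^+_v\}$ with room to shift. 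The claim is that if there is \emph{any} feasible flow $\vecf'$ (for \emph{some} profile) with $\vecf' \ge \vecf$ and $\vecf' \ne \vecf$, then there is a coalition $C$ and deviation $\veca'_C$ such that the new max-clearing state is exactly $\vecf'$ (or at least dominates $\vecf$ strictly on $C$), and every bank in $C$ strictly gains. Banks outside $C$ are weakly better off since flow only increased, and crucially their strategies are left untouched so the new flow remains feasible.

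The main engine is then: as long as the current profile is not a strong equilibrium, there is a coalitional improvement step, and every such step strictly increases $\sum_e f_e$. Strict increase holds because a coalitional improvement step makes every member of $C$ strictly richer, so at least one incoming edge of some $v \in C$ carries strictly more flow; and since we are at the max-clearing (supremum) state both before and after, and outside-$C$ strategies are fixed, the new clearing state weakly dominates the old one on all edges — hence the total strictly increases. To bound the number of steps polynomially in $|E|$ rather than by the (exponential, binary-encoded) magnitudes, I would argue that the max-clearing flow on each auxiliary edge, at any profile reachable by such steps, takes values in a set of polynomial size: in fact each step saturates at least one previously-unsaturated auxiliary edge (the ``bottleneck'' edge of the augmentation), and once an auxiliary edge is saturated it stays saturated along the sequence because flow is nondecreasing. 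Since there are at most $k|E|$ auxiliary edges, the sequence has length $O(k|E|)$; absorbing $k$ (which after the reduction is a fixed parameter of the profile) this is polynomial in $|E|$.

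Finally I would close the loop on termination and the ``Pareto-optimal clearing state $\Rightarrow$ strong equilibrium'' remark: once no coalitional improvement step exists, we are by definition at a strong equilibrium, and the process must stop after $O(k|E|)$ steps by the saturation argument; conversely if $\vecf$ is Pareto-optimal among all achievable clearing states then no deviation can strictly help everyone in any coalition, so we are already done. The hard part, I expect, is the monotonicity lemma — specifically showing that a strictly larger feasible flow can always be \emph{realized} as the max-clearing state of an explicit threshold-$\calM$ deviation by a suitable coalition, using the defining ``suitable set $\calM$'' property (for every threshold vector some strategy in $\calM$ hits it) to build the witnessing strategies, and confirming that the coalition can be chosen so that \emph{every} member strictly gains rather than merely weakly. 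Handling the lattice structure of feasible flows carefully — in particular that increasing flow along an augmenting structure keeps us inside the feasible region for the modified profile and that the supremum of the new feasible set is the intended $\vecf'$ — is where the argument needs the most care.
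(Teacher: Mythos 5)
Your high-level architecture matches the paper's (augment towards a Pareto-optimal clearing state, use the number of saturated edges as a potential bounded by $|E|$, and stop when the clearing state is Pareto-optimal), but two essential ingredients are missing, and you have flagged exactly the first one as ``the hard part'' without supplying it. The key idea in the paper is that every feasible flow is a \emph{circulation}, so if $\tilde{\vecf}$ Pareto-dominates $\vecf$ then the difference $\tilde{\vecf}-\vecf\ge 0$ is again a circulation and hence contains a \emph{cycle} $K$ of edges with $f_e<c_e$. The deviating coalition is $V(K)$: each bank on $K$ first resets its thresholds to the current flow values (so $\vecf$ stays feasible), then raises the threshold on its unique cycle edge by $\delta=\min_{e\in E(K)}(c_e-f_e)$. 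This pushes $\delta$ extra units around $K$, gives \emph{every} member of $V(K)$ exactly $\delta>0$ more assets, leaves all other edges unchanged, and saturates a bottleneck edge. Your ``monotonicity lemma'' asks for a coalition realizing an arbitrary dominating flow $\vecf'$ with all members strictly gaining; that is both stronger than needed and not true as stated (a bank whose incoming and outgoing flow both rise by the same amount on a path, rather than a cycle through it, need not be made strictly better off in a way you can certify). The cycle decomposition is precisely what makes the strict-gain-for-all requirement achievable.

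The second gap is the converse direction: you assert that a Pareto-optimal clearing state admits no coalitional improvement step ``so we are already done,'' but your justification --- that the post-deviation max-clearing state weakly dominates the old one edgewise because outside-$C$ strategies are fixed --- is false. Deviators can redirect payments, so the new clearing state is in general incomparable to $\vecf$ and the total flow $\sum_e f_e$ need not increase under an arbitrary improvement step. The paper instead argues directly: if every $v\in C$ strictly gains, some incoming edge of $v$ carries strictly more flow; tracing backwards (using monotonicity of the unchanged strategies for banks outside $C$) one must eventually close a cycle on which flow strictly increased, and such a cycle has residual capacity in $\vecf$, contradicting Pareto-optimality via the augmentation above. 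Without this backward-tracing argument your termination claim does not close. (Your polynomial bound via saturated auxiliary edges is fine once the steps are the constructed ones rather than arbitrary improvement steps; note the paper does not even need the reduction to auxiliary edges here and gets a bound of $|E|$ steps directly.)
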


We show the theorem in two steps. Consider a strategy profile $\veca$ of threshold-$\calM$ strategies with \emph{Pareto-optimal clearing state} $\vecf$, i.e., there is no other strategy profile $\veca'$ with clearing state $\vecf'$ such that $f'_e \ge f_e$ for all $e \in E$ and strict inequality for at least one $e \in E$.

\begin{lemma}
    \label{lem:max-clearing-SE-lem1}
    Every strategy profile in a max-clearing game with Pareto-optimal clearing state is a strong equilibrium.
\end{lemma}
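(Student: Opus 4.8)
Suppose, for contradiction, that $\veca$ has a Pareto-optimal clearing state $\vecf$ but is not a strong equilibrium, so some nonempty coalition $C$ admits a deviation $\veca' = (\veca'_C,\veca_{-C})$ with clearing state $\vecf'$ and $u_v(\vecf') > u_v(\vecf)$ for all $v \in C$. Since $u_v(\vecf) = f_v = b_v + \sum_{e \in E^-(v)} f_e$, this means $f'_v > f_v$ for every $v \in C$. Let $S = \{v \in V : f'_v > f_v\}$, so $C \subseteq S$. For each $v \in S$ we have $\sum_{e \in E^-(v)} f'_e > \sum_{e \in E^-(v)} f_e$, hence some incoming edge $e = (u,v)$ satisfies $f'_e > f_e$. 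I claim $u \in S$: if $u \notin C$ then $a'_{(u,v)} = a_{(u,v)}$, so $a_{(u,v)}(f'_u) = f'_e > f_e = a_{(u,v)}(f_u)$ forces $f'_u > f_u$ by monotonicity; and if $u \in C$ then $u \in C \subseteq S$. Thus every vertex of $S$ has an incoming edge carrying strictly more flow under $\veca'$ whose tail again lies in $S$, and tracing such edges backwards in the finite graph must close up. Hence there is a simple directed cycle $W$ all of whose edges $e$ satisfy $f'_e > f_e$.

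The first consequence I would record is that every $w \in W$ is ``in default'' under $\vecf$, i.e.\ $f_w < c^+_w$. Otherwise condition~(2) gives $\sum_{e \in E^+(w)} f_e = \min\{f_w,c^+_w\} = c^+_w = \sum_{e \in E^+(w)} c_e$, which together with $f_e \le c_e$ forces $f_e = c_e$ for every $e \in E^+(w)$; but then the outgoing cycle edge $e^\ast \in W$ of $w$ would satisfy $f'_{e^\ast} > f_{e^\ast} = c_{e^\ast}$, impossible. In particular $\sum_{e \in E^+(w)} f_e = f_w$ for $w \in W$.

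Next I would reroute a small amount of money around $W$ in the original game. Pick $\delta > 0$ with $\delta \le f'_e - f_e$ for every edge $e$ of $W$ and $\delta \le c^+_w - f_w$ for every vertex $w$ of $W$ (a finite minimum of positive quantities), and set $g_e = f_e + \delta$ for $e \in W$ and $g_e = f_e$ otherwise; write $g_v = b_v + \sum_{e \in E^-(v)} g_e$. Since $W$ is simple, exactly one in-edge of each $w \in W$ lies on $W$, so $g_w = f_w + \delta \le c^+_w$, whereas $g_v = f_v$ for $v \notin W$. Define a profile $\hat\veca$ where every $v \notin W$ keeps $\veca_v$ and every $w \in W$ switches to a threshold-$\calM$ strategy whose payment vector at total assets $g_w$ equals $f_e + \delta$ on the cycle edge of $w$ and $f_e$ on every other $e \in E^+(w)$. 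Such a strategy exists because these targets are nonnegative, bounded by the respective $c_e$, and sum to $g_w \le c^+_w$: take the first threshold vector $\tau^{(1)}_e$ equal to this target vector, invoke suitability of $\calM$ to realise it, and pad the remaining classes by any suitable strategies; so $\hat\veca$ is a legitimate threshold-$\calM$ profile.

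Finally I would check that $g$ is a sub-solution for $\hat\veca$, i.e.\ $g_{(u,v)} \le \hat a_{(u,v)}(g_u)$ for every edge $(u,v)$. If $u \notin W$ then $\hat a_{(u,v)} = a_{(u,v)}$, $g_u = f_u$, and $g_{(u,v)} = f_{(u,v)} = a_{(u,v)}(f_u)$ (note an edge of $W$ has its tail in $W$, so this case only involves edges with $g_e = f_e$), so equality holds; if $u = w \in W$, then $\hat a_{(u,v)}(g_w) = g_{(u,v)}$ by construction. Since the flow operator of $\hat\veca$ is monotone on the complete lattice $\prod_{e \in E}[0,c_e]$, Knaster--Tarski gives that its greatest fixed point --- the clearing state $\hat\vecf$ of $\hat\veca$ in this max-clearing game --- dominates every such sub-solution, so $\hat\vecf \ge g \ge \vecf$ with strict inequality on the edges of $W$, contradicting Pareto-optimality of $\vecf$. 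The main obstacles I expect are the backward-tracing argument producing the cycle $W$ and the verification that the rerouted flow $g$ is a sub-solution of the perturbed profile; the rest is bookkeeping about thresholds and the lattice of feasible flows.
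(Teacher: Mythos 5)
Your proof is correct and follows essentially the same route as the paper: trace edges with strictly increased flow backwards (using monotonicity for players outside the coalition) to find a directed cycle on which flow strictly increased, and conclude that such a cycle contradicts Pareto-optimality. The only difference is that you spell out the final step explicitly -- building the perturbed profile with thresholds $f_e$ plus $\delta$ on the cycle and verifying that the rerouted flow is a feasible (sub-)solution dominated by the max clearing state -- whereas the paper leaves this implicit in Lemma~\ref{lem:max-clearing-SE-lem1} and relegates the construction to the proof of Lemma~\ref{lem:max-clearing-SE-lem2}.
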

\begin{proof}
    For any game with threshold-${\calM}$ strategies and any initial state, consider the resulting money flow $\vecf$ in the maximal clearing state. Now consider any improvement step of coalition $C \subseteq V$ and a player $v \in C$. Since $u_v((\veca'_C,\veca_{-C}), \vecf') > u_v(\veca,\vecf)$, player $v$ has strictly larger total assets after the deviation. As such, there must be one incoming edge $e \in E^{-}(v)$ for which the money flow on that edge has strictly increased to $f'_{e} > f_e$. Let $w$ be the source of $e$. If $w \in C$, we can repeat the argument for $w$. If $w \not\in C$, then strategy $\veca_w$ remains the same before and after the deviation. $\veca_w$ is monotone, so a strictly larger flow $f'_e > f_e$ means that the total assets of $w$ have strictly increased $u_w((\veca'_C,\veca_{-C}), \vecf') > u_w(\veca,\vecf)$. Thus, if $w \not\in C$ we can repeat the argument for $w$.
    
    Therefore, if there is any coalitional improvement step, then we can start from any node $v \in C$ and move in reverse direction along incoming edges with strictly increased flow. This must eventually lead into a cycle $K$, for which the flow has strictly increased on every edge of $E(K)$. In turn, if the clearing state $\vecf$ is Pareto-optimal, it does not allow a cycle along which flow can be increased. Therefore, we must be in a strong equilibrium. 
\end{proof}

\begin{lemma}
    \label{lem:max-clearing-SE-lem2}
    Consider any profile $\veca$ with clearing state $\vecf$ which is not Pareto-optimal. There is a coalitional improvement step $(\veca, \veca')$ such that (1) the clearing state $\vecf'$ in $\veca'$ Pareto-dominates $\vecf$ and (2) $\vecf'$ saturates at least one more edge than $\vecf$.
\end{lemma}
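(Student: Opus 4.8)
The plan is to extract from a dominating clearing state a directed cycle along which flow can be increased, and then to let exactly the banks on that cycle deviate. Since $\vecf$ is not Pareto-optimal, by definition there is a strategy profile $\tilde\veca$ whose clearing state $\tilde\vecf$ satisfies $\tilde f_e \ge f_e$ for all $e \in E$ with strict inequality somewhere; in particular $\tilde\vecf$ is a feasible flow of $\tilde\veca$. Set $D = \{e \in E : \tilde f_e > f_e\} \neq \emptyset$. The central step is to show that the subgraph $(V,D)$ contains a directed cycle. Suppose not; walking backward along $D$-edges from any edge then terminates (no repeats, as $D$ is acyclic) at a vertex $s$ with no incoming $D$-edge but at least one outgoing $D$-edge. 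Because $\tilde\vecf$ dominates $\vecf$ edgewise, $s$ pays strictly more in total under $\tilde\vecf$, i.e.\ $\min\{\tilde f_s, c^+_s\} > \min\{f_s, c^+_s\}$, which forces $\tilde f_s > f_s$ and $f_s < c^+_s$. But $b_s$ is fixed across profiles, so $\tilde f_s > f_s$ means $\sum_{e \in E^-(s)} \tilde f_e > \sum_{e \in E^-(s)} f_e$, so some incoming edge of $s$ lies in $D$ — contradicting the choice of $s$. Hence a cycle $K$ with $E(K) \subseteq D$ exists, and the same inequality applied at each $v \in V(K)$ shows $f_v < c^+_v$, while $f_e < c_e$ for every $e \in E(K)$ (from $f_e < \tilde f_e \le c_e$).

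Next I would define the deviation. Let $\delta = \min\bigl(\min_{e \in E(K)}(c_e - f_e),\, \min_{v \in V(K)}(c^+_v - f_v)\bigr) > 0$ and $C = V(K)$. For each $v \in C$, with cycle-predecessor edge $e^-(v)$ and cycle-successor edge $e^+(v)$, let $\veca'_v$ be a monotone strategy that, at asset level $f_v + \delta$, pays $f_{e^+(v)} + \delta$ on $e^+(v)$ and $f_e$ on every other outgoing edge; such a strategy clearly exists, and by the equivalence of threshold-$\calM$ strategies with arbitrary monotone strategies noted after the model definition it is an admissible deviation. Put $\veca' = (\veca'_C, \veca_{-C})$. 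Then the flow $\vecf''$ obtained from $\vecf$ by adding $\delta$ on every edge of $E(K)$ is a feasible flow of $\veca'$: at a cycle node the incoming flow rises by exactly $\delta$ to $f_v + \delta \le c^+_v$, whereupon $\veca'_v$ forwards exactly $\delta$ more onto $e^+(v)$ and leaves all other payments unchanged; at a non-cycle node nothing changes, since both endpoints of every $E(K)$-edge lie in $C$. As the game is max-clearing, the clearing state $\vecf'$ of $\veca'$ satisfies $\vecf' \ge \vecf'' \gneq \vecf$, which is (1); and each $v \in C$ gains exactly $\delta$ of incoming flow on its unique cycle-predecessor edge, so $(\veca,\veca')$ is a coalitional improvement step.

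Finally I would establish (2). By the choice of $\delta$, in $\vecf''$ either some $e \in E(K)$ reaches $c_e$, or some $v \in V(K)$ reaches total assets $c^+_v$; in the latter case $v$ now pays out its full liabilities $c^+_v$, and since a bank pays its full liabilities exactly when every one of its outgoing edges is saturated, all outgoing edges of $v$ become saturated in $\vecf''$. Because $f_v < c^+_v$ (respectively $f_e < c_e$), at least one of these edges was not saturated in $\vecf$. Since $\vecf' \ge \vecf''$ and flows never exceed capacities, $\vecf'$ saturates this edge too, while still saturating every edge that $\vecf$ saturated; hence $\vecf'$ saturates strictly more edges than $\vecf$.

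The hard part is the cycle-existence argument: it hinges on external assets being fixed, so that the extra flow carried by a dominating clearing state cannot be created at a source and is forced to circulate. A secondary, routine point is verifying that the deviating strategies $\veca'_v$ can be realized inside the admissible strategy class, for which one uses the stated equivalence with monotone strategies; I would spell out one explicit piecewise-linear choice of $\veca'_v$ to make this fully rigorous.
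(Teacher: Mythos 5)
Your proof is correct and follows essentially the same route as the paper's: extract an augmenting cycle from the difference $\tilde\vecf-\vecf$, have exactly the cycle nodes deviate so that their payments reproduce $\vecf$ off the cycle and carry $\delta$ extra around it, and invoke max-clearing (the supremum of the feasible-flow lattice) to get Pareto-domination and a newly saturated edge. The only differences are cosmetic: you prove cycle existence by a backward-walk argument where the paper cites the circulation-flow decomposition of [BertschingerHS20, Prop.~2], your second term in the definition of $\delta$ is implied by the first (since $c^+_v - f_v = \sum_{e\in E^+(v)}(c_e-f_e)\ge c_{e^+(v)}-f_{e^+(v)}$ when $f_v<c^+_v$), and the admissibility of $\veca'_v$ is realized in the paper by setting first-level thresholds equal to the target payments, which is exactly the explicit construction you defer.
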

\begin{proof}
    Given strategy profile $\veca$, consider clearing state $\vecf$. By assumption, there is another profile with clearing state $\tilde{\vecf}$ that Pareto dominates $\vecf$. It is easy to see~\cite[Proposition 2]{BertschingerHS20} that all feasible flows in clearing games can be interpreted as circulation flows. As a consequence, the difference $\delta_e = \tilde{f}_e - f_e \ge 0$ is again a circulation flow and, due to Pareto domination, contains at least one cycle $K$ with a positive flow. Hence, $\vecf$ admits a cycle $K$ such that $f_e < c_e$ for all $e \in E(K)$. 
    
    We construct a profitable coalitional deviation using an intermediate step. First, all banks $v \in V(K)$ deviate to any threshold-$\calM$ strategy with $\tau_e^{(1)} = f_e$ for all $e \in E^+(v)$ and $\veca_v^{(1)}, \veca_v^{(2)} \in \calM$ chosen arbitrarily but in correspondence with the new thresholds $\vectau^{(1)}$. The banks in $V(K)$ simply set all their thresholds to the flows in $\vecf$. Note that $\vecf$ remains a feasible flow in this intermediate profile. Now let $\delta = \min_{e \in E(K)} c_e - f_e > 0$ be the minimum residual weight on cycle $K$. We let each player $v \in V(K)$ deviate to a threshold-$\calM$ strategy $\veca'_v$ with ${\tau'}_e^{(1)} = f_e$ for all $e \in E^+(v) \setminus E(K)$, ${\tau'}_e^{(1)} = f_e + \delta$ for the unique $e = (v,w) \in E(K)$, and ${\veca'}_v^{(1)}, {\veca'}_v^{(2)} \in \calM$ chosen arbitrarily but in correspondence with the new thresholds ${\vectau'}^{(1)}$. After this deviation, there is a feasible flow $\vecf'$ with $f'_e = f_e$ for all $e \in E \setminus E(K)$ and $f'_e = f_e + \delta$ for all $e \in E(K)$. Moreover, this feasible flow strictly improves the assets of all banks $v \in V(K)$ by $\delta > 0$. Since the maximal clearing state Pareto-dominates any feasible flow in terms of edge flows and assets, we have constructed a coalitional improvement step for coalition $C = V(K)$. The clearing state $\vecf'$ in $\veca'$ Pareto-dominates $\vecf$ and saturates at least one more edge on cycle $K$.
\end{proof}

Since the coalitional improvement step in Lemma~\ref{lem:max-clearing-SE-lem2} saturates one more edge with flow, any sequence of such steps has length at most $|E|$. This proves the theorem.

Consider any objective function $z(\vecf)$ that is \emph{flow-monotone}, i.e., if $\vecf'$ Pareto-dominates $\vecf$, then $z(\vecf') > z(\vecf)$. A wide variety of natural social welfare functions in clearing games enjoy this property, e.g., the utilitarian social welfare $z(\vecf) = \sum_{v \in V}(b_v + \sum_{e \in E^-(v)}f_e)$ or the total debt that is repaid $z(\vecf) = \sum_{v \in V} u_v(\vecf) = \sum_{e \in E} f_e$, the total number of liquid banks $z(\vecf) = |\{ v \in V \mid b_v + \sum_{e \in E^-(v)} f_e \ge c^+ \}|$, the Nash social welfare $z(\vecf) = \left(\prod_{v \in V} u_v(\vecf)\right)^{1/n}$, and many more.

Given a flow-monotone welfare function $z$, consider a \emph{$z$-optimal money flow} that satisfies weak flow conservation, i.e., a flow $\vecf^*$ that maximizes $z(\vecf^*)$ subject to $\sum_{e \in E^+(v)} f_e^* \le \sum_{e \in E^-(v)} f_e^* + b_v$ for every node $v \in V$ and $f_e^* \le c_e$ for every edge $e \in E$. Clearly, $\vecf^*$ is Pareto-optimal among all feasible flows for all strategy profiles. Inspecting the proof of Lemma~\ref{lem:max-clearing-SE-lem2}, $\vecf^*$ can be realized as a clearing state of a strategy profile, where every player $v$ just uses a $2$-threshold-$\calM$ strategy with $\tau_e = f_e^*$ for all $e \in E^+(v)$. Clearly, this strategy profile is a strong equilibrium due to Lemma~\ref{lem:max-clearing-SE-lem1}. Note that here further structural properties (such as edge-ranking or proportionality) of strategies $a_v^{(1)}, a_v^{(2)} \in \calM$ for all $v \in V$ are inconsequential -- the maximal clearing state yields a flow $f_e = \tau_e^{(1)}$ for all $e \in E$.

Formally, this allows to bound the \emph{strong price of stability}, which denotes the ratio of the social welfare in the social optimum and the highest social welfare of all strong equilibria.

\begin{corollary}
    \label{cor:maxClearPoS}
     In every max-clearing game with threshold-${\calM}$ strategies, the strong price of stability is 1 for every flow-monotone welfare function. There is an optimal strong equilibrium with the same clearing state for every choice of $\calM$.
\end{corollary}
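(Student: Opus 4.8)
The plan is to produce one strong equilibrium whose clearing state maximizes $z$ over \emph{all} clearing states of \emph{all} strategy profiles; since this clearing state is then also the social optimum, the strong price of stability is $1$, and the construction will turn out to be insensitive to the choice of $\calM$ beyond suitability.

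\textbf{Step 1 (a welfare-optimal relaxed flow).} Fix a flow-monotone welfare function $z$ and, as in the discussion preceding the corollary, let $\vecf^*$ maximize $z(\vecf^*)$ over the set of flows with $f_e^* \le c_e$ for all $e \in E$ and $\sum_{e \in E^+(v)} f_e^* \le \sum_{e \in E^-(v)} f_e^* + b_v$ for all $v \in V$. This maximum is attained, since the region is a compact polytope containing $\vecf \equiv 0$ and the welfare functions of interest are at least upper semicontinuous (e.g.\ the number of liquid banks is an indicator sum over closed halfspaces). Every clearing state $\vecf$ of every profile lies in this region: $f_e = a_e(\cdot) \le c_e$ as $a_e$ maps into $[0,c_e]$, and $\sum_{e \in E^+(v)} f_e = \min\{f_v, c^+_v\} \le f_v = b_v + \sum_{e \in E^-(v)} f_e$. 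Hence $z(\vecf^*) \ge z(\vecf)$ for every clearing state $\vecf$, i.e., $z(\vecf^*)$ upper-bounds the social welfare of any profile.

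\textbf{Step 2 (realizing $\vecf^*$ as a max-clearing state).} First I would record a ``no-hoarding'' property of any $z$-optimal $\vecf^*$: for every $v$, $\sum_{e \in E^+(v)} f_e^* = \min\{f_v^*, c^+_v\}$ with $f_v^* = b_v + \sum_{e \in E^-(v)} f_e^*$. Otherwise $v$ has unused assets and an unsaturated outgoing edge $e$ (because $\sum_{e \in E^+(v)} f_e^* < c^+_v$), so nudging $f_e^*$ upward preserves relaxed feasibility and Pareto-dominates $\vecf^*$, hence strictly increases $z$ by flow-monotonicity --- contradiction. Now let each bank $v$ play a $2$-threshold-$\calM$ strategy with $\tau_e^{(1)} = f_e^*$ for every $e \in E^+(v)$ and $\veca_v^{(1)}, \veca_v^{(2)} \in \calM$ chosen compatibly with these thresholds (such strategies exist since $\calM$ is suitable). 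Then $T_v^{(1)} = t_v^{(1)} = \sum_{e \in E^+(v)} f_e^* = \min\{f_v^*, c^+_v\}$, so when $v$ receives exactly $f_v^*$ it pays a total of $\min\{f_v^*, c^+_v\} = T_v^{(1)}$, which is consumed entirely by the first installment; thus $a_e(f_v^*) = a_e^{(1)}(t_v^{(1)}) = \tau_e^{(1)} = f_e^*$ on every $e$. So $\vecf^*$ is a feasible flow of this profile; and since $\vecf^*$ is Pareto-optimal among feasible flows of \emph{every} profile (any flow Pareto-dominating it would still lie in the Step-1 region and beat $z(\vecf^*)$), it must be the maximum clearing state of this profile --- regardless of how $\veca_v^{(1)}, \veca_v^{(2)}$ were picked inside $\calM$.

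\textbf{Step 3 (conclusion).} The profile from Step~2 has a Pareto-optimal clearing state, so by Lemma~\ref{lem:max-clearing-SE-lem1} it is a strong equilibrium; its clearing state is $\vecf^*$, which attains the Step-1 upper bound on the social welfare of every profile. Therefore the social optimum equals $z(\vecf^*)$ and is realized by a strong equilibrium, so the strong price of stability is $1$, witnessed by the same clearing state $\vecf^*$ for every suitable $\calM$. The main obstacle is Step~2: verifying that $\vecf^*$ is honestly a feasible flow of the constructed threshold profile, which rests on the no-hoarding property of $z$-optima together with the $k=2$ bookkeeping identity $T_v^{(1)} = t_v^{(1)}$; a lesser point is justifying that the maximum in Step~1 is attained for welfare functions that are only upper semicontinuous rather than continuous.
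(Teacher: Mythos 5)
Your proof follows essentially the same route as the paper: take the $z$-optimal relaxed flow $\vecf^*$, realize it as the max-clearing state of a $2$-threshold-$\calM$ profile with $\tau_e^{(1)} = f_e^*$, and invoke Lemma~\ref{lem:max-clearing-SE-lem1} via Pareto-optimality of $\vecf^*$. Your Step~2 ``no-hoarding'' argument correctly supplies a detail the paper leaves implicit (the paper only says the realization follows by ``inspecting the proof of Lemma~\ref{lem:max-clearing-SE-lem2}''), and your remark on attainment of the maximum is a reasonable extra precaution; the argument is correct.
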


\subsection{Min-Clearing}\label{subsec:min-clearing}
Let us turn to games with min-clearing. In a minimal clearing state, all flow is initiated by external assets. In particular, consider an iterative process where players initially pay off debt only utilizing their external assets. This creates an initial flow. Then, in the next round, players may use additional incoming assets to pay off further debt, and so on. For a given strategy profile, the iteration of this process will monotonically increase the flow towards a minimal clearing state (see also~\cite{CsokaH18} for more formal arguments of this fact). Indeed, this idea can be applied in a structured fashion for threshold edge-ranking games to compute a minimal clearing state in polynomial time. 

\begin{proposition}
    \label{prop:minClear}
    A minimal clearing state can be computed in polynomial time for games with threshold edge-ranking strategies.
\end{proposition}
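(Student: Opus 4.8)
The plan is to reduce to a plain edge-ranking game and then evaluate the natural monotone iteration toward the least fixed point in a structured, event-driven way. First I would apply the reduction already used in the proof of Proposition~\ref{prop:maxClear}: by Observation~\ref{obs:thresh-to-part} each edge $e \in E^+(v)$ is replaced by its $k$ auxiliary copies, and since within each priority class the permutation $\pi_v^{(j)}$ refines the partition, the permutations $\pi_v^{(1)},\ldots,\pi_v^{(k)}$ merge into a single complete ranking over all outgoing auxiliary edges of $v$, paid greedily in that order. Because the thresholds are part of the input and written in binary, the auxiliary network $G'=(V,E')$ has size polynomial in the input. So it suffices to compute the minimal clearing state of a plain edge-ranking game; from now on I treat such a game, writing $G=(V,E)$ for the (auxiliary) network.

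Next I would recall that the minimal clearing state $\vecf^*$, i.e.\ the infimum of the lattice of feasible flows, is the least fixed point of the operator $\Phi$ defined by $\Phi(\vecf)_e = a_e\big(b_v+\sum_{e'\in E^-(v)}f_{e'}\big)$ for $e\in E^+(v)$. Iterating $\vecf^{t+1}=\Phi(\vecf^t)$ from $\vecf^0\equiv 0$ yields a monotone nondecreasing sequence bounded by $(c_e)_e$ whose limit is $\vecf^*$ (monotonicity of the $a_e$ plus induction shows every feasible flow dominates every $\vecf^t$; cf.~\cite{CsokaH18}). The point is to evaluate this limit without a naive round-by-round simulation, which may need exponentially many rounds under binary encoding.

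The core idea is an event-driven simulation of this iteration. For bank $v$ with ranking $e_1\succ\cdots\succ e_d$ and prefix sums $C_j=\sum_{i\le j}c_{e_i}$, at asset level $y_v$ exactly the edges with $C_i\le y_v$ are saturated, those with $C_{i-1}\ge y_v$ receive $0$, and at most one \emph{frontier} edge $e_j$ (with $C_{j-1}<y_v<C_j$) receives $y_v-C_{j-1}$. Along the iteration the $y_v$ only increase, so each bank's frontier only advances; hence there are at most $\sum_v d_v=|E|$ frontier advances over the whole run. I would run the iteration in \emph{phases} delimited by these advances. Within a phase the set of saturated edges and every frontier are fixed, so $\Phi$ restricted to the current phase region is affine; on the assets of the banks with a live frontier it reads $\mathbf{y}\mapsto M\mathbf{y}+\mathbf{d}$ with a $0/1$ matrix $M$ having at most one $1$ per column (the head of that bank's frontier edge), i.e.\ the ``frontier graph'' is a functional graph. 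I would compute the least fixed point of this affine map that dominates the current iterate: if it lies inside the phase region (all frontier flows in $[0,c_e)$, no new saturation), then it is a genuine fixed point of $\Phi$ that dominates all $\vecf^t$ of the phase, so the iteration has converged and it equals $\vecf^*$; otherwise, following the affine trajectory, some frontier edge reaches its capacity first, and I advance that bank's frontier, mark the edge saturated, and pass to the next phase. Each phase ends with a permanent saturation, so there are at most $|E|$ phases, each handled by one linear-algebra computation of size $O(|V|)$, giving a polynomial-time algorithm.

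The hard part will be the intra-phase step when the frontier graph contains a directed cycle: then $I-M$ is singular, the affine map has an affine subspace of fixed points, and along a cycle fed by outside inflow the iterates $\mathbf{y}^t=M^t\mathbf{y}^0+\big(\sum_{s<t}M^s\big)\mathbf{d}$ do not converge, so the phase must terminate by saturating some edge of the cycle; one must show these iterates are eventually affine in $t$ on the cyclic components, determine which capacity constraint becomes tight first, and carry this out with polynomial-size rational arithmetic (on the acyclic part the fixed point is unique and read off directly). Two smaller things also need a routine check: that no edge assumed to carry $0$ is forced positive within a phase — it is ranked strictly below its source's frontier, which pays it nothing until that frontier advances — and that the phase bound is genuinely $O(|E|)$ rather than merely finite, which is exactly the monotone advancement of each frontier through its finite ranking.
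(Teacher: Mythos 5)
Your first step is exactly the paper's: via Observation~\ref{obs:thresh-to-part} a threshold edge-ranking strategy becomes a plain edge-ranking strategy over polynomially many auxiliary edges. Where you diverge is what comes next. The paper's proof is essentially a citation: it invokes the max-clearing algorithm of~\cite{BertschingerHS20} and stops it after resolving the ``necessary cycles'' but before the ``optional cycles''. You instead reconstruct that algorithm from first principles as an event-driven evaluation of the Kleene iteration toward the least fixed point: the frontier of each bank's ranking advances monotonically, giving at most $|E|$ phases, and within a phase the dynamics are affine over a functional frontier graph. Your dichotomy between frontier cycles fed by positive outside inflow (which cannot admit a fixed point in the phase and must terminate by saturating a cycle edge) and the acyclic part is precisely the necessary-cycle mechanism the paper alludes to; the complementary case you should state explicitly is a frontier cycle with \emph{zero} net inflow, where $I-M$ is singular, the fixed points form a one-parameter family, and the minimal clearing state keeps the circulation at its current value rather than increasing it --- this is exactly the ``optional cycle'' that the paper's algorithm declines to resolve, and it is where min- and max-clearing differ. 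With that case spelled out (and the routine polynomial-size rational arithmetic you already flag), your argument is correct and has the advantage of being self-contained rather than deferring the entire algorithmic content to the cited work; its cost is that you must re-verify convergence of the monotone iteration to the infimum of the lattice, which the paper gets for free from~\cite{CsokaH18,BertschingerHS20}.
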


\begin{proof}
    For Proposition~\ref{prop:maxClear} we saw that we can interpret threshold edge-ranking strategies as edge-ranking strategies over auxiliary edges. Hence, we can apply the algorithm in~\cite{BertschingerHS20}, where we stop the algorithm after resolving all ``necessary cycles'' (but none of the ``optional cycles''). This results in a minimal clearing state.
\end{proof}

In contrast, observations in \cite[Chapter 3]{Kusnetsov2018} suggest that the minimal clearing state for proportional payments can not be computed with a bounded number of steps, even without thresholds (or $\tau_e^{(i)} = c_e$ for all $i \le k$). Furthermore, the non-existence of pure Nash equilibria for partition edge-ranking and proportional strategies shown in \cite[Proposition 13]{BertschingerHS20} and \cite[Theorem 4]{Kanellopoulos21} transfers to min-clearing -- it can be verified that there are \emph{unique} feasible flows for any strategy profile arising in the games without pure equilibria. 

We first observe that Corollary~\ref{cor:maxClearPoS} cannot be extended to min-clearing games. The main problem is that there might be no social optimum, even for utilitarian social welfare $z(\vecf) = \sum_{v \in V} u_v(\veca,\vecf)$. 

\begin{example}\label{ex:noOpt} \rm
Consider threshold-$\calM$ strategies in the game depicted in Figure~\ref{fig:min-clearing-undefined-opt}, for any suitable set $\calM$. Player $v$ is the only player with a non-trivial strategy choice. Also $v$ is the only player with external assets and, hence, all money flow must be originated by $v$. We, therefore, want to find a strategy for $v$ that maximizes social welfare. If $v$ pays all external assets either towards $u_1$ or $w_1$, the social welfare is $1 + n$ in both cases. Now assume $v$ splits assets and picks a strategy with threshold $\epsilon$ on $(v,w_1)$ and $1-\epsilon$ on $(v,u_1)$. Then payments correspond to thresholds. The payment of $\epsilon$ is repeatedly passed through the cycle until all players $w_i$ have completely paid off their liabilities. The payment of $1-\epsilon$ is passed on along the path to $u_n$. The social welfare $1 + n + \epsilon + n\cdot(1-\epsilon) = 1 + 2n - \epsilon\cdot(n-1)$ is monotonically increasing when $\epsilon \to 0$. However, at $\epsilon=0$ no flow is initiated on the cycle, and the social welfare drops to $1+n$. Consequently, there is no social optimum for this instance. \hfill $\blacksquare$
\end{example}

\begin{figure}[t]
    \centering
    \resizebox{!}{0.16 \textwidth}{
    \begin{tikzpicture}[
	pics/circular arc/.style args={from #1 to #2}{code={
			\path[name path=arc] 
			let \p1=(#1),\p2=(#2),\n1={Mod(720+atan2(\y1,\x1),360)},
			\n2={Mod(720+atan2(\y2,\x2),360)},
			\n3={ifthenelse(abs(\n1-\n2)<180,\n2,\n2+360)}
			in (\n1:\r) arc(\n1:\n3:\r);
			\draw[pic actions,
			name intersections={of=#1 and arc,by=arcstart},
			name intersections={of=#2 and arc,by=arcend}] 
			let \p1=(arcstart),\p2=(arcend),\n1={Mod(720+atan2(\y1,\x1),360)},
			\n2={Mod(720+atan2(\y2,\x2),360)},
			\n3={ifthenelse(abs(\n1-\n2)<180,\n2,\n2+360)}
			in (\n1:\r) arc(\n1:\n3:\r);
	}}]
	\def\r{1}

	\path
	(0:\r) node[circle, draw=black, inner sep=1.8pt, name path=w3] (w3) {$w_3$}
	(72:\r) node[circle, draw=black, inner sep=1.8pt, name path=w2] (w2) {$w_2$}
	(144:\r) node[circle, draw=black, inner sep=1.8pt, name path=w1] (w1) {$w_1$}
	(216:\r) node[circle, draw=black, inner sep=1.8pt, name path=wn] (wn) {$w_n$}
	(288:\r) node[name path=wdots] (wdots) {$\dots$};
	\begin{scope}[black,-{To[bend]}]
		\path 
		pic{circular arc=from w1 to w2}
		pic{circular arc=from w2 to w3}
		pic{circular arc=from wn to w1};
	\end{scope}
	\begin{scope}[black,{To[bend]}-]
		\path 
		pic{circular arc=from wdots to w3}
		pic{circular arc=from wn to wdots};
	\end{scope}
	
	\node[circle, draw=black, inner sep=4pt] (v) [left=of w1] {$v$};
	\node[circle, draw=black, inner sep=2.3pt] (u1) [left=of v] {$u_1$};
	\node[circle, draw=black, inner sep=2.3pt] (u2) [left=of u1] {$u_2$};
	\node (udots) [left=0.5cm of u2] {$\dots$};
	\node[circle, draw=black, inner sep=2.3pt] (un) [left=0.5cm of udots] {$u_n$};
	
	\node[rectangle, draw=black] (bv) [above=0.1cm of v] {$1$};
	
	\draw[->] (v) to (w1);
	\draw[->] (v) to (u1);
	\draw[->] (u1) to (u2);
	\draw[->] (u2) to (udots);
	\draw[->] (udots) to (un);
	
\end{tikzpicture}
}
    \caption{A min-clearing game without a social optimum for utilitarian social welfare, for all \TM\ strategies. All edges have unit weight and the external assets of $v$ are $b_v=1$.}
    \label{fig:min-clearing-undefined-opt}
\end{figure}

Interestingly, we can obtain positive results towards existence of strong equilibria. With min-clearing, observe that a profitable coalitional improvement step $(\veca, \veca')$ requires at least one player $v \in C$ with strictly positive assets with respect to $\veca$. Otherwise, due to min-clearing, no flow among the agents in $C$ can evolve. 
%

\begin{lemma}
    \label{lem:min-clearing-SE-lem1}
    Every strategy profile in a min-clearing game with Pareto-optimal clearing state is a strong equilibrium.
\end{lemma}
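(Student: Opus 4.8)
The plan is to mimic the structure of the max-clearing argument in Lemma~\ref{lem:max-clearing-SE-lem1}, but to account for the additional constraint imposed by min-clearing, namely that every unit of flow must be traceable back to some bank's external assets. As in the max-clearing case, suppose for contradiction that $\veca$ has a Pareto-optimal clearing state $\vecf$ but admits a coalitional improvement step $(\veca,\veca')$ for some coalition $C$, with resulting clearing state $\vecf'$. Every $v \in C$ strictly improves, so each has some incoming edge $e \in E^-(v)$ with $f'_e > f_e$; walking backwards along strictly-increased incoming edges, and using that banks outside $C$ keep their (monotone) strategies fixed (so a strict increase on one of their outgoing edges forces a strict increase on one of their incoming edges), the walk either closes into a cycle $K$ with $f'_e > f_e$ on all of $E(K)$, or it eventually reaches a bank $w$ (possibly in $C$) whose external assets $b_w$ now contribute strictly more to outgoing flow than before. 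In the max-clearing proof only the cycle case can arise because external assets are fully spent in both clearing states; here, because min-clearing may leave assets unused, I have to handle the second case as well.

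In the cycle case, the argument is identical to Lemma~\ref{lem:max-clearing-SE-lem1}: $\vecf'$ restricted to $E(K)$ strictly exceeds $\vecf$, so $\max\{\vecf,\vecf'\}$ (or just the observation that we can push $\min_{e\in E(K)}(f'_e - f_e)$ extra units around $K$ on top of $\vecf$) yields a feasible circulation strictly Pareto-dominating $\vecf$ — here I would invoke the circulation-flow characterization of feasible flows from~\cite[Proposition 2]{BertschingerHS20}, exactly as in Lemma~\ref{lem:max-clearing-SE-lem2} — contradicting Pareto-optimality of $\vecf$. For the second case, the backward walk terminates at a bank $w$ whose outgoing flow in $\vecf'$ strictly exceeds that in $\vecf$ while the incoming flow into $w$ is weakly larger in $\vecf'$; since $w$'s incoming flow into $w$ in $\vecf$ together with $b_w$ bounds its spendable amount, and in $\vecf'$ it is spending strictly more, there is slack: $b_w + \sum_{e \in E^-(w)} f_e > \sum_{e \in E^+(w)} f_e$, i.e.\ $w$ did not spend all available money under $\vecf$. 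Then the path traced from $w$ forward to the node $v\in C$ where the walk started is a path in the support of $\vecf'$ along which every edge has $f'_e > f_e$; in particular each such edge has $f_e < c_e$. I can then route an additional $\delta>0$ units of $w$'s unspent assets along this path within the original strategy profile's feasible-flow lattice — more precisely, the vector $\vecf$ plus $\delta$ on this augmenting path is itself a feasible flow for $\veca$ (it respects all $a_e$ constraints since $\vecf'$ does and monotonicity allows interpolating), strictly Pareto-dominating $\vecf$, again a contradiction.

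The main obstacle I anticipate is making the ``second case'' rigorous: unlike in max-clearing, a bank with positive assets is allowed to leave money unused only if it has already paid all its debt, so I need to argue carefully that the terminal bank $w$ of the backward walk genuinely has both spare money \emph{and} an outgoing edge it is underpaying in $\vecf$ — otherwise the walk could not have a strictly-increased outgoing edge at $w$ in $\vecf'$. The cleanest way is probably to observe that $\vecf'$ itself witnesses a feasible flow for the deviated profile in which $w$ spends more, but since $w$'s strategy and incoming flow are (weakly) unchanged, the strategy condition $\sum_{e\in E^+(w)} a_e(y) = \min\{y, c^+_w\}$ forces either $y$ to have increased (handled by continuing the backward walk, contradicting that $w$ is terminal) or $c^+_w$ to bind in $\vecf'$ but not $\vecf$ — and the latter is exactly the slack I want. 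Once that case analysis is pinned down, I then note (as the lemma's natural strengthening, mirroring the remark after Lemma~\ref{lem:max-clearing-SE-lem2}) that the unused-assets path likewise gives a Pareto-dominating feasible flow, so Pareto-optimality is contradicted in every case, and $\veca$ must be a strong equilibrium.
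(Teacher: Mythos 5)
Your backward-walk setup is right, and, reassuringly, your ``second case'' essentially cannot arise: by strategy condition (2), a bank leaves money unspent only when all of its debt is already paid, i.e., all of its outgoing edges are saturated in $\vecf$, so such a bank cannot carry a strictly increased outgoing edge; and a bank outside $C$ whose total assets did not increase pays exactly the same amount on every outgoing edge, since its payments are a function of its total assets alone. Hence the walk never terminates and always closes into a cycle $K$ with $f'_e > f_e$, and thus $f_e < c_e$, on all of $E(K)$ --- up to here you match the paper.

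The genuine gap is in how you close the cycle case. You argue that $\vecf$ plus $\delta$ around $K$ is a \emph{feasible flow} that Pareto-dominates $\vecf$ and declare a contradiction with Pareto-optimality. But Pareto-optimality here quantifies over \emph{clearing states of strategy profiles}, and in min-clearing the clearing state is the \emph{infimum} of the feasible-flow lattice. Exhibiting a larger point of the lattice proves nothing: a cycle of zero-asset banks with mutual unsaturated debts admits a positive feasible circulation, yet its minimal clearing state is identically zero for every profile. To contradict Pareto-optimality you must produce a profile whose \emph{minimal} clearing state dominates $\vecf$, which requires the extra cycle flow to be traceable back to external assets. This is precisely the ingredient the paper adds: it argues that the cycle produced by the backward walk contains a bank $v$ with $u_v(\veca,\vecf) > 0$ (since in the min-clearing state $\vecf'$ all increased flow originates from external assets, the walk can be routed through the finite set $A = \{v : u_v(\veca,\vecf)>0\}$ until it revisits a member of $A$), and then --- made explicit in the proof of Theorem~\ref{thm:min-clearing-SE} --- a threshold deviation forces $v$ to spend the first $\delta$ of its assets around $K$, so that the augmented flow really is the minimal clearing state of the deviated profile. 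Your proof establishes neither that $K$ contains such a bank nor that the augmented flow is realized as a minimal clearing state, so the contradiction does not yet follow. (The same conflation of ``feasible flow'' with ``minimal clearing state'' would also undermine the augmenting-path argument in your second case, but that case is vacuous anyway.)
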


\begin{proof}
    Suppose $\veca$ is a given \TM\ strategy profile with Pareto-optimal minimal clearing state $\vecf$. We assume that $\veca$ forms no strong equilibrium, therefore there exists a coalition $C \subseteq V$ where every player $v_i \in C$ strictly profits by deviating to a strategy profile $\veca'$. Hence, the players perform a coalitional improvement step implying $u_{v_i}((\veca'_C,\veca_{-C}),\vecf')>u_{v_i}(\veca,\vecf)$ for every ${v_i} \in C$. Consider player $v_i \in C$. As her external assets remain unchanged and the strategy $\veca_v$ is monotone, at least one incoming edge $e \in E^-(v_i)$ has strictly more flow. Recall that regarding minimal clearing all flow must originate from external assets. Therefore, there exists a path to $v_i$ from some player $v_j \in C$ with strictly positive assets. We denote the set of all such players by $A = \{v_j \in V \mid u_{v_j}(\veca, \vecf)>0\}$.
    Now consider player $v_j \in A$. By the same arguments as above, the flow of at least one edge $e \in E^-(v_j)$ must be increased. Let $v_k$ be the source of $e$. In case $v_k\in C$, the same argument applies. On the other hand, if $v_k \notin C$ she only increases flow on any outgoing edge, if she has larger assets herself due to monotonicity. Therefore, at least one incoming edge $e \in E^-(v_k)$ has more flow regarding the new strategy profile $(\veca'_C, \veca_{-C})$. Since all flow must be initiated by a player in $A$, we can repeat this argument until we again reach some player in $A$. If this player is again $v_j$, there exists a cycle of edges with increased flow. Otherwise, when a new player $v_l$ was found we operate as for $v_j$. Because the set $A$ is finite, this process must terminate in a cycle $K$. Hence, there must be a cycle $K$ including a player with $u_v(\veca,\vecf) > 0$, along which the flow of money can be Pareto-increased. As a consequence, if a strategy profile has a Pareto-optimal clearing state $\vecf$, then there is no deviation that can increase the money circulation on any such cycle. Hence, the profile must be a strong equilibrium.
\end{proof}

For a wide range of strategies, we can extend our result and show that, for every initial strategy profile, there exists a sequence of coalitional improvement steps that end in a strong equilibrium. We consider games in which the set of strategies $\calM$ that has a consistency property. Consider a strategy $\veca_v \in \calM$ for a player $v$. Suppose we reduce all thresholds to $\hat{\tau}_e = a_e(x)$ for each $e \in E^+(v)$ and some value $0 \le x \le T_v = \sum_{e \in E^+(v)} \tau_e$. Then there should be another strategy $\hat{\veca}_v \in \calM$ such that $\hat{\veca}$ results in assignments for each $y \le x$ that are consistent with $\veca$.

Formally, we call a set $\calM$ of strategies \emph{reduction consistent} if for every vector $\vectau_v$ of thresholds, every suitable strategy $\veca_v$ (with $a_e(T_v) = \tau_e$ for all $e \in E^+(v)$) and every vector $\hat{\vectau}_v$ with $\hat{\tau}_e = a_e(\hat{T}_v)$ for some $\hat{T}_v \in [0, T_v]$, there exists a strategy $\hat{\veca}$ such that the payments $\hat{a}_e(x) = a_e(x)$, for all $e \in E^+(v)$ and all $x \in [0,\hat{T}_v]$.

\paragraph{Examples.}
Threshold edge-ranking strategies are reduction consistent. To observe this, consider a set of thresholds $\vectau_v$ and an edge-ranking strategy $\veca_v$ that allocates payments towards outgoing edges based on an ordering $\pi_v$. For any $x \in [0,T_v]$, there is at most one edge $e$ that is paid partially, i.e., $0 < a_e(x) < \tau_e$. Now for a given $\hat{T}_v \in [0,T_v]$, reduce all thresholds with $\tau_e \geq a_e(\hat{T}_v)$ to $\hat{\tau}_e = a_e(\hat{T}_v)$. Then with the same ordering $\pi_v$ in $\hat{\veca}$, the edges will be paid to most $a_e(x)$, sequentially in the same order as in $\veca$. As such, $\hat{a}_e(x) = a_e(x)$ for all $x \le \hat{T}_v$.

As another example, consider proportional strategies. Consider a set of thresholds $\vectau_v$ and a proportional strategy $\veca_v$. Recall that $\veca$ assigns payments to all edges $e \in E^+(v)$ in proportion to their thresholds $\tau_e$. If we reduce $\hat{\tau}_e = a_e(\hat{T}_v)$ for some $\hat{T}_v \in [0,T_v]$, then the proportional strategy $\hat{\veca}$ w.r.t.\ thresholds $\hat{\vectau}$ adheres to the same proportions
\[
    \frac{\hat{\tau}_e}{\sum_{e \in E^+(v)} \hat{\tau}_e} = \frac{a_e(x)}{\sum_{e \in E^+(v)} a_e(x)}  = \frac{\tau_e}{\sum_{e \in E^+(v)} \tau_e}\enspace.
\]
As such, for every $x \in [0,\hat{T}_v]$ we have $\hat{a}_e(x) = a_e(x)$.\\

Reduction consistent \TM\ strategies are \TM\ strategies with reduction consistent set $\calM$. For every game with such strategies, we construct sequences of coalitional improvement steps similar to max-clearing. However, the proof construction for max-clearing in Lemma~\ref{lem:max-clearing-SE-lem2} does not directly apply. Intuitively, when changing the strategies along a cycle, too much flow could end up being assigned to edges outside the cycle, and hence the intended flow along the cycle is not obtained in a minimal clearing state.

\begin{theorem}
    \label{thm:min-clearing-SE}
    Every min-clearing game with reduction consistent threshold-${\calM}$ strategies is coalitional weakly acyclic. For every initial strategy profile, there is a sequence of coalitional improvement steps that ends in a strong equilibrium. The sequence requires a number of steps that is polynomial in $|E|$.
\end{theorem}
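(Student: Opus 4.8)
The plan is to follow the two-step structure of the max-clearing proof. Lemma~\ref{lem:min-clearing-SE-lem1} already carries out the first step: every profile whose minimal clearing state is Pareto-optimal is a strong equilibrium. It therefore suffices to prove the min-clearing analogue of Lemma~\ref{lem:max-clearing-SE-lem2}: from every profile $\veca$ whose minimal clearing state $\vecf$ is \emph{not} Pareto-optimal, there is a coalitional improvement step to a profile $\veca'$ whose minimal clearing state $\vecf'$ Pareto-dominates $\vecf$ and saturates at least one more edge of $G$. Since $\vecf'\ge\vecf$ coordinate-wise, a saturated edge stays saturated along such a sequence, so iterating the step reaches a profile with Pareto-optimal clearing state, hence a strong equilibrium, after at most $|E|$ coalitional improvement steps.

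For the improvement step I would begin exactly as in Lemma~\ref{lem:max-clearing-SE-lem2}: since $\vecf$ is not Pareto-optimal there is an achievable minimal clearing state $\tilde\vecf$ that Pareto-dominates it, and since all feasible flows are circulations~\cite[Proposition~2]{BertschingerHS20} the difference $\tilde\vecf-\vecf$ is a circulation containing a cycle $K$ with strictly positive flow; hence $f_e<\tilde f_e\le c_e$ for all $e\in E(K)$, and in particular every $v\in V(K)$ is unsaturated under $\vecf$ and already spends all its assets. At this point the max-clearing construction breaks: merely installing thresholds $\tau_e=f_e+\delta$ along $K$ does \emph{not} raise the minimal clearing state, because $\vecf$ remains a fixed point of the updated payment operator --- a self-sustaining circulation along $K$ is never realized in a minimal clearing state, where every unit of flow must be traceable to external assets. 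The surplus flow along $K$ in $\tilde\vecf$ is instead \emph{seeded} by external assets that $\tilde\veca$ routes into $V(K)$ while $\veca$ routes them elsewhere. I would use this to build the deviation: the coalition $C$ consists of $V(K)$ together with the nodes carrying the redirected money, and each $v\in C$ switches to a \TM\ strategy that (i) on the interval of total assets $v$ actually handles under $\vecf$ reproduces $\veca_v$ --- this is exactly what reduction consistency yields, since reducing the thresholds of $\veca_v$ to $\hat\tau_e=a_e(f_v)$ gives a strategy in $\calM$ agreeing with $\veca_v$ on $[0,f_v]$ --- and (ii) routes any further incoming money preferentially towards and around $K$. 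Property (i) prevents the failure mode where an arbitrary $\calM$-strategy at low asset levels would dump everything on edges outside $K$ and never prime the cycle; it makes the minimal-clearing iteration from the all-zero flow first reconstruct $\vecf$. Property (ii) then lets the iteration climb, as the redirected external assets prime $K$ and the circulation amplifies around $K$ until some edge of $K$ saturates, which fixes the magnitude of the deviation.

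It then remains to check that this is indeed a coalitional improvement step with the claimed progress: that $\vecf'\ge\vecf$ with a newly saturated edge (the edge of $K$ driven to capacity), and that \emph{every} member of $C$ ends strictly richer. Each $v\in V(K)$ gains strictly more flow on its incoming $K$-edge; a node that merely forwards redirected money diverts flow from its other out-edges only transiently, because the amplification around $K$ returns at least as much, so in the limiting clearing state its other out-flows are restored, its assets strictly increase, and no edge outside $E(K)$ loses flow (the same phenomenon as in Example~\ref{ex:intro}). Feeding the resulting $\veca'$ back into the same lemma and repeating proves the theorem.

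I expect this last bundle of requirements to be the main obstacle, and it is where reduction consistency is genuinely needed. Unlike max-clearing, one cannot get the improvement ``for free'' by declaring a circulation along $K$; one must redirect real external-asset flow into $K$, modifying the strategies of a whole set of nodes at once, and do so without destroying the part of the fixed-point computation that already produced $\vecf$ and without any out-edge outside $E(K)$ (equivalently, any node outside $C$) losing flow. Reduction consistency is precisely the hypothesis that lets the new strategies stay backward-compatible on the old operating range while adding the rerouting on top; reconciling the two demands --- agreement below $f_v$ and controlled amplification above it, simultaneously for all of $C$ --- and showing that the amplification compensates exactly, so that Pareto-domination is preserved everywhere and every deviator strictly gains, is the part I would expect to require the most care.
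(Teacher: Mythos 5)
You have the right skeleton (reduce to a min-clearing analogue of Lemma~\ref{lem:max-clearing-SE-lem2}, use Lemma~\ref{lem:min-clearing-SE-lem1}, bound the sequence by $|E|$ via saturation), you correctly diagnose why the max-clearing cycle construction fails under min-clearing, and you correctly identify what reduction consistency is for. But the core of the construction is missing, and the workaround you propose in its place does not go through. You plan to seed the cycle $K$ by enlarging the coalition to include off-cycle nodes that redirect external assets into $V(K)$. This breaks in exactly the two places you flag as ``requiring the most care'': (i) a node that merely reroutes money into $K$ but does not lie on $K$ receives none of the amplified circulation, so its total assets do not strictly increase and it cannot be a member of a coalitional improvement step; (ii) rerouting its assets necessarily removes flow from the edges it previously paid, so edges outside $E(K)$ do lose flow and Pareto-domination of the new clearing state fails. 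Your appeal to the amplification ``returning at least as much'' only applies to nodes on the cycle itself.

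The paper's proof avoids redirection entirely by first proving that a seed already sits \emph{on} a suitable cycle. Writing $\delta_e=\tilde f_e-f_e$ for the difference circulation, minimality of $\tilde\vecf$ forces some node $v^*$ on its support to satisfy $u_{v^*}(\tilde\veca,\tilde\vecf)>\sum_{e\in E^+(v^*)}\delta_e$ (otherwise the circulation would be self-sustaining and could be subtracted from $\tilde\vecf$, contradicting that $\tilde\vecf$ is a minimal clearing state); since $\delta$ is a circulation this gives $u_{v^*}(\veca,\vecf)>0$. Hence there is a non-saturated cycle $K$ containing a node with strictly positive assets already under $\vecf$. The coalition is then just $V(K)$: each member first (by reduction consistency) truncates its thresholds to reproduce $\vecf$, and then prepends a \emph{new highest-priority} threshold class assigning $\delta=\min_{e\in E(K)}(c_e-f_e)$ to its cycle edge and $0$ elsewhere. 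The seed node's first $\delta$ of existing assets primes the cycle, the circulation returns $\delta$ to every cycle node, and all remaining assets are spent exactly as before, so $f'_e=f_e+\delta$ on $E(K)$ and $f'_e=f_e$ elsewhere. Without the $v^*$ argument you have no guarantee that a cycle with an on-cycle seed exists, and without placing the $\delta$ in a fresh top-priority class (rather than ``routing further incoming money'' above the old operating range) the seed node might have no surplus left to prime the cycle. These are the two ideas your proposal is missing.
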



\begin{proof}
    Existence of a strong equilibrium follows by Lemma \ref{lem:min-clearing-SE-lem1}. It remains to show that for any profile $\veca$ with clearing state $\vecf$ which is not Pareto-optimal there is a coalitional improvement step $(\veca, \veca')$ such that (1) the clearing state $\vecf'$ in $\veca'$ Pareto-dominates $\vecf$ and (2) $\vecf'$ saturates at least one more edge than $\vecf$.
    
    Given strategy profile $\veca$, consider clearing state $\vecf$. By assumption, there is another strategy profile $\tilde{\veca}$ with minimal clearing state $\tilde{\vecf}$ that Pareto dominates $\vecf$. Since all feasible flows in clearing games can be interpreted as circulation flows~\cite[Proposition 2]{BertschingerHS20}, the difference $\delta_e = \tilde{f}_e - f_e \ge 0$ is again a (non-zero) circulation flow. Consider the set of edges with circulation flow $E_{>0} = \{ e \in E \mid \delta_e > 0\}$ and the set of incident nodes $V_{\ge 0} = \{ u,v \in V \mid e = (u,v) \in E_{> 0}\}$. There must be at least one node $v^* \in V_{>0}$ such that $u_{v^*}(\tilde{\veca},\tilde{\vecf}) > \sum_{e \in E^+(v^*)} \delta_e$. Otherwise, $\tilde{f}_e = \delta_e$ for all $e \in E_{>0}$, and we could obtain a smaller clearing state in $\tilde{\veca}$ setting $\tilde{f}_e = 0$ for all $e \in E_{\ge 0}$. Note that node $v^*$ must have $u_{v^*}(\veca,\vecf) = u_{v^*}(\tilde{\veca},\tilde{\vecf}) - \sum_{e \in E^+(v^*)} \delta_e > 0$. As such, $\vecf$ must admit a cycle $K$ such that $f_e < c_e$ for all $e \in E(K)$ and $K$ contains node $v^*$ with $u_{v^*}(\veca,\vecf) > 0$. 

    We construct a coalitional improvement step as follows.
    First, compute the set $A=\{v_i \in V \mid u_{v_i}(\veca,\vecf)>0\}$ of players with strictly positive utility. Next, choose a node $v_i \in A$ and determine all simple cycles containing $v_i$ via depth-first search over the set $E' = \{ e \in E \mid f_e < c_e\}$ of non-saturated edges. If no cycle containing $v_i$ is found\footnote{At least one cycle must exist for at least one $v_i \in A$ by the discussion above.}, proceed to the next player from $A$. Otherwise, let $K'$ denote any one of the cycles found.
    We first perform an intermediate step for all players $v$ in $V(K')$, and determine the highest index $l$ such that $f_e \leq \tau^{(l)}_e$, for every outgoing edge $e \in E^+(v)$. Then, all players $v \in V(K')$ deviate to a \TM\ strategy with thresholds $\hat{\tau}_e^{(i)}=\tau_e^{(i)}$, for $i < l$ and $\hat{\tau}_e^{(l)}=f_e$, for all $e \in E^+(v)$. Choose all strategies $\hat{\veca}_v^{(1)}, \hat{\veca}_v^{(2)},\dots, \hat{\veca}_v^{(l+1)} \in \calM$ such that the resulting flow $\hat{f}_e$ equals the initial flow $f_e$ for each edge. Because all strategies $\veca_v$ are reduction consistent there exists at least one such strategy.
    We now modify the strategy $\hat{\veca}$ to increase circulation on $K'$.
    Let $\delta = \min_{e \in E(K')} c_e - \hat{f}_e$ be the minimal residual capacity of all edges in the cycle. For every $v_j \in V(K')$ and edge $e_j \in E(K')$,  
    For every edge  $e_j \in E(K')$ in the cycle, define the thresholds as $\hat{\tau}_{e_j}^{(l+1)}=\hat{\tau}_{e_j}^{(l)}+\delta$ for $l>0$ and $\hat{\tau}_{e_j}^{(1)}=\delta$. 
    Regarding all other edges $e' \in E^+(v_j) \setminus \{e_j\}$ , we set $\hat{\tau}_{e'}^{(l+1)}=\hat{\tau}_{e'}^{(l)}$ for $l>0$ and $\hat{\tau}_{e'}^{(1)}=0$. Further, define $\hat{a}_e^{(l+1)}=\hat{a}_e^{(l)}$ and arbitrarily choose $\hat{a}_e^{(1)}$ for all $e \in E^+(v_j)$. 
    Let $\veca'$ denote the resulting strategy profile. Clearly, every player $v_i \in A$ still has strictly positive assets with respect to $\veca'$. Thus, at least one player $v_j \in V(K')$ initiates flow on cycle $K'$. By construction of the first vector of thresholds $\vectau_v^{(1)}$, the flow on $K'$ is first raised to $\delta$. Afterwards, each player has the same assets available as with respect to $\veca$. From the next step on, players distribute their payments exactly as before for $\veca$. 
    Since $\delta>0$, all players in the cycle strictly profit from the deviation while the flow on all other edges remains unchanged. Moreover, the pair $(\veca'_{V(K')},\veca_{-V(K')})$ forms a coalitional improvement step for coalition $V(K')$.
\end{proof}
    
The improvement step saturates one more edge in the clearing state. Hence, any sequence of such steps is limited in length by $|E|$. Note that each improvement step adds one vector of thresholds to the strategy of every player. In contrast, for max-clearing games the strong equilibrium can be obtained using only a single set of thresholds for each player. It is an interesting open problem if a similar property can be shown for min-clearing games with reduction consistent strategies. Moreover, it is unclear whether a sequence of improvement steps ending in a strong equilibrium exists for strategies without said property.

The sequence of coalitional improvement steps to a strong equilibrium can even be performed for threshold proportional strategies, when given an initial profile \emph{along with its minimal clearing state}. For constructing the improvement step, it is sufficient to recognize whether there is a non-saturated cycle with at least one player having strictly positive assets. We obtain $\veca'$ as described above, and we obtain the minimal clearing state of $\veca'$ adaptively by increasing the flow along the cycle until the first edge becomes saturated.

The non-existence of a social optimum in Example~\ref{ex:noOpt} arises from a continuity problem. For every flow $f_e$ on edge $e=(v,w_1)$ there exists some flow $\vecf'$ with $f'_e< f_e$ and strictly higher welfare. A simple way to resolve this problem is to restrict attention to a finite set of strategies for each player, i.e., a finite set of possible threshold vectors and a finite set $\calM$ of strategies for same-priority installments. Then, an optimal strategy profile always exists. A natural example enjoying this discretization are \emph{coin-ranking} strategies studied in~\cite{BertschingerHS20}, where flow is split into ``coins'', i.e., units of 1. Each player then decides on an order in which the coins will be distributed to her outgoing edges. Note that a coin-ranking strategy can be interpreted as threshold edge-ranking strategy with $k = b_v + \sum_{e \in E^-(v)} c_e$, such that for every $i<k$ we have $\tau_e^{(i+1)}=\tau_e^{(i)}+1$ for exactly one edge $e\in E^+(v)$ and $\tau_{e'}^{(i+1)}=\tau_{e'}^{(i)}$ for all other edges $e'\in E^+(v) \setminus \{e\}$. Hence, coin-ranking strategies are threshold edge-ranking strategies restricted to a finite number of possible threshold vectors (and a finite set $\calM$).

It is easy to see that the arguments in Theorem \ref{thm:min-clearing-SE} apply to coin-ranking strategies, since the clearing state $\vecf$ will be an integral flow. Hence, for every min-clearing game with coin-ranking strategies a strong equilibrium exists and can be computed efficiently. However, the following theorem shows that computing the optimal strategy profile in such games is \classNP-hard.

\begin{theorem}
    \label{thm:min-clearing-coin-NPC}
	For min-clearing games with threshold edge-ranking strategies and unit thresholds, it is \classNP-hard to compute a strategy profile that maximizes utilitarian social welfare.
\end{theorem}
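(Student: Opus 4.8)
The plan is to reduce from \tSAT. Given a formula $\phi$ with variables $x_1,\dots,x_n$ and clauses $C_1,\dots,C_m$, I would build in polynomial time a clearing game $G_\phi$ together with a target value $\Theta$ such that the maximum utilitarian social welfare over all threshold edge-ranking profiles with unit thresholds (equivalently, coin-ranking profiles) in the min-clearing game on $G_\phi$ equals $\Theta$ when $\phi$ is satisfiable and is strictly smaller otherwise. Because a social optimum exists for these finite strategy sets and its value can be read off from any optimal profile, an algorithm computing an optimal profile would decide satisfiability, proving \classNP-hardness.

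The construction rests on three gadget types, all built around the phenomenon already visible in Example~\ref{ex:noOpt}: in a minimal clearing state, a single unit of flow entering a cycle of unit-weight edges forces flow $1$ on every edge of that cycle. (1) For each clause $C_j$, a \emph{reward gadget}: an entry node $r_j$ whose only outgoing edge feeds a fresh cycle of $L$ unit-weight edges, with $L$ chosen far larger than the number of all remaining edges. This gadget contributes $\approx L$ to the total flow exactly when at least one unit of ``activation'' flow reaches $r_j$, and $O(1)$ otherwise; hence the welfare of any near-optimal profile is governed by the number of activated reward gadgets. (2) For each variable $x_i$, a \emph{choice gadget}: a node $d_i$ with a prescribed amount of external assets and outgoing edges $(d_i,T_i)$ and $(d_i,F_i)$; from $T_i$ the assets are routed into a cycle carrying one ``tap'' node per clause in which $x_i$ occurs positively, each tap pushing one activation unit into the corresponding $r_j$, and symmetrically from $F_i$ for the negative occurrences. (3) A fixed backbone that wires these pieces and the external assets together. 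A satisfying assignment then induces a profile in which each $d_i$ sends all its assets to $T_i$ or $F_i$ according to the assignment, every $r_j$ is activated, and the welfare equals $\Theta$; for the converse, a profile of welfare $\ge \Theta$ must (because $L$ dominates) activate all $m$ reward gadgets simultaneously, from which a consistent satisfying assignment is recovered.

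The hard part will be making $G_\phi$ \emph{tamper-proof}. Since the welfare is maximised over \emph{all} profiles, I also control the priorities of every internal gadget node, and min-clearing is unforgiving in that any positive trickle into a cycle saturates it. The danger is a ``partial'' or ``inconsistent'' routing --- $d_i$ splitting its assets between $T_i$ and $F_i$, or a tap firing off borrowed circulating flow --- that activates reward gadgets in a pattern no honest truth assignment could realise, letting a profile activate all $m$ of them without $\phi$ being satisfiable. The construction must therefore be calibrated --- the external assets at each $d_i$, the weights of the tap and activation edges, and the cycle lengths --- so that the family of reward gadgets that can be simultaneously activated is \emph{exactly} the family of clauses satisfiable by some truth assignment; in particular so that no split of $d_i$'s assets and no reuse of circulating flow ever helps more clauses than the better of ``$x_i$ true'' and ``$x_i$ false''. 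Establishing this for the minimal clearing state of every profile, and then pinning down the exact value of $\Theta$ and the converse direction, is where essentially all the technical effort lies; it is also why the hardness does not follow from the constructions in~\cite{BertschingerHS20,Kanellopoulos21}, where deciding the correct priorities was already the main difficulty.
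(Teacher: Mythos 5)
Your proposal is a plan rather than a proof, and the part you defer --- showing that the set of simultaneously activatable reward gadgets is exactly the set of clauses satisfiable by one truth assignment --- is not a technicality but the entire content of the reduction. Worse, the gadget mechanics you sketch do not survive min-clearing flow conservation. On a cycle of unit-weight edges, an injection of one unit at a node $u_0$ circulates around the cycle and returns only to $u_0$; every other cycle node receives exactly one unit and must forward it along the cycle for the circulation to close at all. So a ``tap'' node sitting on the variable cycle cannot both keep the cycle alive and push a unit into its clause gadget $r_j$: it has one unit of assets and two unit liabilities, and whichever edge it deprioritizes gets nothing. To make every tap fire you would need to inject more assets at $d_i$, and then (as you yourself note) nothing prevents $d_i$ from splitting coins between $T_i$ and $F_i$ and activating both polarities --- precisely the inconsistent routing that breaks soundness. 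Calibrating weights and cycle lengths to exclude this is not obviously possible, and you give no argument that it is.

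The paper avoids all of this by reducing from \DHamilPath\ instead of \tSAT: a single source $s$ with $b_s=1$ feeds a network with a split node pair $v_i^-, v_i^+$ per original vertex, and the one coin traces out a walk. Social welfare counts (roughly) the number of distinct nodes the coin visits before it revisits a node $v_i^-$ (which then holds two units against a single unit liability and stalls the flow), so welfare $2n+1$ is achievable iff the coin traverses a Hamiltonian path. This makes the soundness direction a short counting argument, with no need for amplification gadgets or tamper-proofing. If you want to salvage a \tSAT-based reduction, you need a mechanism by which one unit of ``true'' flow reaches several clause gadgets without ever giving a choice node enough surplus to feed both literals of a variable; as written, your construction provides no such mechanism.
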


\begin{proof}
    Formally, we will present a class of games such that it is \classNP-complete to decide if there is a strategy profile with social welfare at least $2n+1$. The reduction is from the \DHamilPath\ problem. Given an instance of this problem, i.e., a directed graph $G=(V,E)$, we construct a min-clearing game with graph $G'=(V',E')$ as follows. First, add a node $s$ with $b_s=1$ to $V'$. This will be the only player with external assets and, hence, the origin of the total flow in the network. For every node $v_i \in V$, construct two nodes $v_i^-$ and $v_i^+$ and add the edges $(s, v_i^-)$ and $(v_i^-, v_i^+)$ to $E'$. Finally, for every edge $(v_i,v_j)\in E$ include $(v_i^+, v_j^-)$ in $E'$ and set all edge weights to 1. It is easy to see that this construction can be performed in $O(|V|+|E|)$ time.
	
	We claim that a Hamiltonian path in $G$ exists if and only if there is a strategy profile with social welfare of at least $2n+1$ in $G'$.
	First, assume there exists a Hamiltonian path in $G$. We rename the nodes in $V$ and analogously in $V'$ for the path to be given by $(v_1,\dots,v_n)$. By means of this path, the strategy profile $\veca$ will be constructed. Player $s$ introduces flow by paying her debt to $v_1$, i.e., $\tau_{(s, v_1^-)}^{(1)}=1$ and $\tau_{e'}^{(1)}=0$ for all $e' \in E^+(s)\setminus \{(s, v_1^-)\}$. The remaining thresholds can be chosen arbitrarily. Analogous, for every player $v_i^+$ with $i<n$ we prioritize edge $(v_i^+, v_{i+1}^-)$, hence, we define $\tau_{(v_i^+, v_{i+1}^-)}^{(1)}=1$ and $\tau_{e'}^{(1)}=0$ for all other edges. Note that players $v_i^-$ only have one outgoing edge which predetermines their strategies. In the resulting flow $\vecf$, every player $v_i^+$ and $s$ hold at most assets of one. With the exception of one player, the incoming flow of all players $v_i^-$ is limited to one. If $v_n^+$ has an outgoing edge, exactly one player $v_i^-$ can own assets of two. The described strategy profile results in a minimal clearing state where all prioritized edges are saturated. In conclusion, every player $v_i^-$ and $v_i^+$ and $s$ hold assets of at least one, implying the social welfare is at least $2n+1$.
	
	Now assume there exists a strategy profile $\veca$ with social welfare at least $2n+1$.
	W.l.o.g. $s$ pays off its liabilities to $v_1^-$, implying assets of one for players $v_1^-$ and $v_1^+$. Let $(v_1^+,v_2^-)$ be the first edge $v_1^+$ pays, otherwise we rename the nodes accordingly. Thus, the assets of $v_2^-$ and $v_2^+$ are increased to 1. We denote the player $v_2^+$ transfers her incoming flow to by $v_i^-$. Social welfare of $2n+1$ can only be obtained, if player $v_i^-$ is currently without assets. Otherwise, this player would hold assets of at least two with only one outgoing edge. Hence, the flow would yield social welfare of at most $2k+2$ for $2k$ visited players. By iteratively applying this argument, a social welfare of at least $2n+1$ can only be achieved if there exists a path including all nodes, or in other words a Hamiltonian path.
\end{proof}

\section{Clearing Games with Fixed Seniorities}\label{sec:fixed-senorities}
In this section, we consider clearing games with fixed seniorities, i.e., when thresholds are fixed exogenously. In this case, a player with fixed thresholds is only allowed to choose appropriate strategies $a_e^{(i)}$. We will concentrate on the simpler case of fixed \emph{partition} strategies. A bank with a fixed partition strategy is obligated to pay debt of edges completely in a given order with ties. A strategic choice arises only in case of ties, i.e., for edges of the same priority, where the bank can strategically choose $a_e^{(i)}$ to pay for the edges with seniority $i$.

In the remainder of this section, we consider games in which some agents have fixed partition-$\calM$ strategies, while others are free to choose threshold-$\calM$ strategies, for any suitable set $\calM$. For the fixed partitions, we specify the priority class of an edge. Our results imply that the positive properties for \TM\ strategies do not persist if partitions are fixed for some players. Notably, these conditions are not implied by existing constructions in~\cite{BertschingerHS20, Kanellopoulos21}.

We start by proving $\classNP$-hardness for computation of the optimal strategy profile in both min- and max-clearing games.

\begin{theorem}\label{thm:sen-compute-OPT}
    Consider max-clearing and min-clearing games with \TM\ strategies and $k\geq 2$. For fixed seniorities, it is \classNP-hard to compute an optimal strategy profile with respect to social welfare.
\end{theorem}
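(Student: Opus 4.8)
\emph{Plan.} I would prove \classNP-hardness by a polynomial reduction from a standard \classNP-complete problem such as \tSAT{} (a reduction from \ThreeDM{} looks equally workable), constructing a clearing game whose maximum attainable social welfare exceeds an explicit threshold $W^\star$ exactly when the source instance is a yes-instance. The conceptual core is that once seniorities are fixed, a bank's only remaining freedom is how to split a given budget among edges of \emph{equal} priority; I want the welfare-optimal choice of these splits to encode the combinatorial choice in the source instance, and I want fixed high-priority ``absorber'' edges placed downstream to discretize that choice. Concretely: a bank whose fixed partition puts a wide absorber edge (leading to a dead end) \emph{ahead} of a useful low-priority edge only begins to use the useful edge once enough money has arrived to overflow the absorber, so money that is spread thinly over several branches is effectively lost. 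This is precisely where the $k\ge 2$ hypothesis and the ``absorber-before-reward'' ordering of the fixed partitions will do the work.

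\emph{Construction (sketch).} From an instance $I$, I build a game with a single source $s$ carrying all the external assets, so that in every profile all money flow originates at $s$. Variable/selection gadgets are banks whose (equal- or low-priority) outgoing edges correspond to the binary/selection choices of $I$; the downstream literal/clause gadgets are banks with \emph{fixed} partitions of the absorber-then-reward type above, where an activated reward edge feeds a path of polynomial length $L$ of non-strategic relay banks. I would keep the network acyclic (source $\to$ selection gadgets $\to$ verification gadgets $\to$ reward paths), so that every strategy profile admits a \emph{unique} feasible flow; hence min- and max-clearing coincide and one construction settles both parts of the theorem, for every suitable $\calM$ and every $k\ge 2$. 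Choosing $L$ large relative to all other gadget sizes makes each activated verification gadget contribute $L$ to the social welfare and makes these $L$-terms dominate every other contribution, so $W^\star := W_{\mathrm{base}} + (\text{number of constraints})\cdot L$, with a constant base value, is a clean separating threshold.

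\emph{Correctness and main obstacle.} In the yes-direction, a solution of $I$ tells each selection gadget how to route its budget integrally so that every verification gadget overflows its absorber and activates its reward path; summing contributions yields welfare exactly $W^\star$. For the converse, I would argue that any profile with welfare $\ge W^\star$ must route every selection-gadget budget integrally — a strictly fractional split diverts an $\varepsilon$ that is too small to push any downstream gadget over its absorber threshold, so it buys only a bounded amount of path flow while forfeiting at least one $L$-sized reward — and that the induced $0/1$ (or selection) pattern must then activate every verification gadget, i.e.\ must be a feasible solution of $I$. The hard part is exactly this last step: ruling out \emph{all} beneficial fractional (more generally, monotone) deviations of the free players, since a bank can smear tiny amounts over many edges. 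I must calibrate the fixed absorber weights and the reward-path length $L$ so that the welfare of any such smeared flow is provably strictly below that of the intended integral routing, in a way that is insensitive to which strategies in $\calM$ the free players use; tightening these inequalities so that $W^\star$ is a single instance-independent integer is the delicate part of the full write-up.
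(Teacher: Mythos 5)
Your plan matches the paper's proof in all essentials: a reduction from \tSAT{} on an acyclic network (so min- and max-clearing coincide and one construction covers both cases), with fixed high-priority ``absorber'' edges of large weight $B$ into sinks that lock each literal node unless its variable node pays it in full, and a length-$B$ reward path behind a formula node $\varphi$ (itself guarded by an $(m-1)$-weight absorber) whose activation dominates the social welfare. The calibration you flag as the delicate part is resolved in the paper simply by taking $B\gg m$, so that the at most $m$ units of clause money can never unlock a literal on their own and $\varphi$'s reward edge overflows only when all $m$ clause payments reach unlocked literals, which forces the integral, satisfying routing.
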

\begin{proof}
\begin{figure}[t]
\centering
\resizebox{!}{0.35\textwidth}{
\begin{tikzpicture}
    \node[circle, draw=black, inner sep=4.3pt] (1) at (0,0) {$x_1$};
    \node[rectangle, draw=black] (b1) [above=0.1cm of 1] {$B$};
    \node[circle, draw=black, inner sep=2.5pt] (1t) [right=0.4cm of 1] {$x_1^T$};
    \node[circle, draw=black, inner sep=2.5pt] (1f) [left=0.4cm of 1] {$x_1^F$};
    \node[circle, draw=black, inner sep=2.5pt] (s1t) [above=0.4cm of 1t] {$s_1^T$};
    \node[circle, draw=black, inner sep=2.5pt] (s1f) [above=0.4cm of 1f] {$s_1^F$};
    \draw[->] (1) to node[below, midway] {$B$} (1t);
    \draw[->] (1) to node[below, midway] {$B$} (1f);
    \draw[->] (1t) to node[left, midway] {$B$, \textcolor{blue}{1}} (s1t);
    \draw[->] (1f) to node[left, midway] {$B$, \textcolor{blue}{1}} (s1f);
    
    \node[circle, draw=black, inner sep=4.3pt] (2) at (5,0) {$x_2$};
    \node[rectangle, draw=black] (b2) [above=0.1cm of 2] {$B$};
    \node[circle, draw=black, inner sep=2.5pt] (2t) [right=0.4cm of 2] {$x_2^T$};
    \node[circle, draw=black, inner sep=2.5pt] (2f) [left=0.4cm of 2] {$x_2^F$};
    \node[circle, draw=black, inner sep=2.5pt] (s2t) [above=0.4cm of 2t] {$s_2^T$};
    \node[circle, draw=black, inner sep=2.5pt] (s2f) [above=0.4cm of 2f] {$s_2^F$};
    \draw[->] (2) to node[below, midway] {$B$} (2t);
    \draw[->] (2) to node[below, midway] {$B$} (2f);
    \draw[->] (2t) to node[right, midway] {$B$, \textcolor{blue}{1}} (s2t);
    \draw[->] (2f) to node[left, midway] {$B$, \textcolor{blue}{1}} (s2f);
    
    \node[circle, draw=black, inner sep=4.3pt] (3) at (10,0) {$x_3$};
    \node[rectangle, draw=black] (b3) [above=0.1cm of 3] {$B$};
    \node[circle, draw=black, inner sep=2.5pt] (3t) [right=0.4cm of 3] {$x_3^T$};
    \node[circle, draw=black, inner sep=2.5pt] (3f) [left=0.4cm of 3] {$x_3^F$};
    \node[circle, draw=black, inner sep=2.5pt] (s3t) [above=0.4cm of 3t] {$s_3^T$};
    \node[circle, draw=black, inner sep=2.5pt] (s3f) [above=0.4cm of 3f] {$s_3^F$};
    \draw[->] (3) to node[below, midway] {$B$} (3t);
    \draw[->] (3) to node[below, midway] {$B$} (3f);
    \draw[->] (3t) to node[right, midway] {$B$, \textcolor{blue}{1}} (s3t);
    \draw[->] (3f) to node[right, midway] {$B$, \textcolor{blue}{1}} (s3f);
    
    \node[circle, draw=black, inner sep=4.3pt] (k1) at (2.5,-1.5) {$\kappa_1$};
    \node[rectangle, draw=black] (bk1) [below=0.1cm of k1] {$1$};
    \draw[->] (k1) to[in=300, out=160, looseness=0.3] node[left, pos=0.4] {1} (1t);
    \draw[->] (k1) to[in=240, out=0, looseness=0.8] node[below, pos=0.2] {1} (2t);
    
    \node[circle, draw=black, inner sep=4.3pt] (k2) at (7.5,-1.5) {$\kappa_2$};
    \node[rectangle, draw=black] (bk2) [below=0.1cm of k2] {$1$};
    \draw[->] (k2) to[in= 340, out=180, looseness=1] node[below, pos=0.1] {1} (1t);
    \draw[->] (k2) to[in=300, out=170, looseness=0.5] node[above, pos=0.2] {1} (2f);
    \draw[->] (k2) to[in=220, out=0, looseness=0.8] node[below, pos=0.2] {1} (3t);
    
    \node[circle, draw=black, inner sep=4.3pt] (phi) at (5,3) {$\varphi$};
    \node[circle, draw=black, inner sep=2.5pt] (sphi) [right=1.5cm of phi] {$s_{\varphi}$};
    \node[circle, draw=black, inner sep=2.5pt] (u1) [left= of phi] {$u_1$};
    \node[circle, draw=black, inner sep=2.5pt] (u2) [left= of u1] {$u_2$};
    \node (udots) [left=0.3cm of u2] {$\dots$};
    \node[circle, draw=black, inner sep=2.5pt] (uB) [left=0.3cm of udots] {$u_B$};
    \draw[->] (1t) to[in=220, out=45, looseness=1] node[above, midway] {$m$} (phi);
    \draw[->] (1f) to[in=200, out=60, looseness=1.1] node[above, midway] {$m$} (phi);
    \draw[->] (2t) to[in=280,out=120,looseness=0.2] node[left, pos=0.6] {$m$} (phi);
    \draw[->] (2f) to[in=260,out=60,looseness=0.2] node[left, pos=0.6] {$m$} (phi);
    \draw[->] (3t) to[in=340,out=120,looseness=1.1] node[above, midway] {$m$} (phi);
    \draw[->] (3f) to[in=320,out=135,looseness=1] node[above, midway] {$m$} (phi);
    \draw[->] (phi) to node[above, midway] {$m-1$, \textcolor{blue}{1}} (sphi);
    \draw[->] (phi) to node[above, midway] {1, \textcolor{blue}{2}} (u1);
    \draw[->] (u1) to node[above, midway] {1} (u2);
    \draw[-] (u2) to (udots);
    \draw[->] (udots) to (uB);
\end{tikzpicture}
}
    \caption{Example construction of a game in the proof of Theorem~\ref{thm:sen-compute-OPT}. Black edge labels indicate edge weights while blue labels indicate seniorities.}
    \label{fig:sen-compute-OPT}
\end{figure}
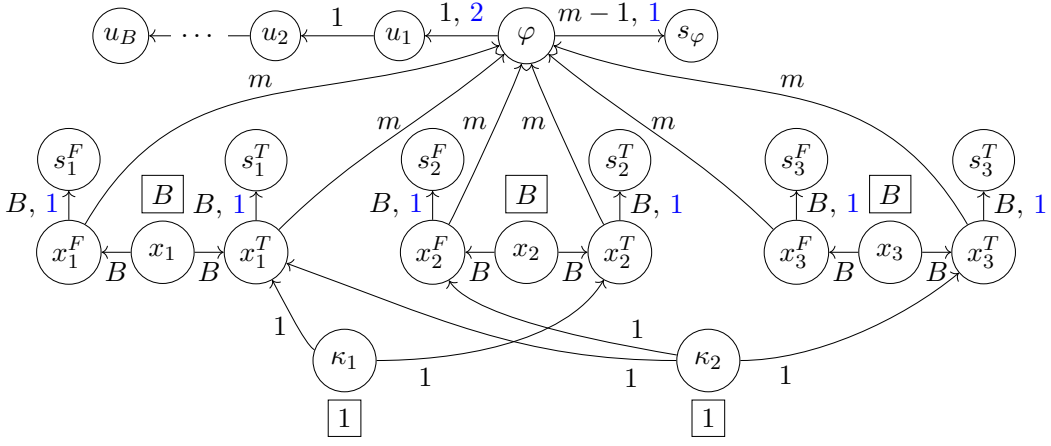
    We prove the statement by reduction from \tSAT . For a given instance, let $\varphi$ denote the Boolean formula with variables $x_1,x_2,\dots,x_n$ and clauses $\kappa_1,\kappa_2,\dots,\kappa_m$. We construct a clearing game with an acyclic graph $G=(V, E)$, hence, all flow originates from external assets. For this reason,
    in this game minimal and maximal clearing states coincide for every strategy profile. We construct the game as follows.
    
    First, we define a gadget for every variable $x_i$ consisting of the players $x_i, x_i^T, x_i^F, s_i^T$ and $s_i^F$.
    Player $x_i$ holds external assets of $B\gg m$, where $B$ is polynomial in $n$ and $m$. Further, $x_i$ owes assets of $B$ to both $x_i^T$ and $x_i^F$, i.e., $(x_i, x_i^T), (x_i,x_i^F)\in E$. Additionally, we add the edges $(x_i^T,s_i^T)$ and $(x_i^F,s_i^F)$ with seniority 1 and weight $B$ each. 
    For every clause $\kappa_j$, we include player $\kappa_j$ with external assets of 1. If variable $x_i$ appears as $\neg x_i$ (or $x_i$) in clause $\kappa_j$, the unit-weight edge $(\kappa_j,x_i^F)$ (or $(\kappa_j, x_i^T)$) is added to $E$. 
    Further, we add player $\varphi$ and edges $(x_i^T,\varphi)$ and $(x_i^F,\varphi)$ with edge weight $m$ and seniority 2 for every $i$.
    Player $\varphi$ has liabilities of $m-1$ to sink $s_\varphi$ with seniority 1 and liabilities of 1 with seniority 2 to a linear graph consisting of $u_1,u_2,\dots,u_B$.  
    An example construction for the instance $(x_1 \vee x_2) \wedge (x_1 \vee \neg  x_2 \vee x_3)$ is depicted in Figure \ref{fig:sen-compute-OPT}. 
    
    Intuitively, when player $x_i$ pays off all assets towards $x_i^T$ (or $x_i^F$), we interpret the literal $x_i$ (or $\neg x_i$) in $\varphi$ to be assigned with TRUE. The players $x_i^T$ and $x_i^F$ are obligated to first pay off all debt towards a sink. Therefore, one of these players can make payments outside the gadget only if the literal associated with her is true. We call such a player \emph{unlocked}. 
    Note that by the choice of $B$, it is impossible to unlock both $x_i^T$ and $x_i^F$ with flow of $B+m$ which is the maximal flow achieved when all $\kappa_j$ pay their assets towards the same gadget.
    Assume $\kappa_j$ to settle the debt to an unlocked player. By definition, she can forward the payments from $\kappa_j$ to $\varphi$. In contrast, a \emph{locked} player would forward the payments to the respective sink. Whenever player $\varphi$ receives incoming flow of $k$, at least $k$ players $\kappa_j$ pass their assets to an unlocked player. Thus, the assets of $\varphi$ can be interpreted as the number of satisfied clauses. Thus, whenever there is strictly positive flow along the linear graph, all clauses are interpreted as satisfied.
   
    For every formula $\varphi$, the number of nodes and edges in the constructed game is linear in the number of variables $n$ and number of clauses $m$. Hence, the construction can be performed in polynomial time.
    
    We state that there exists a satisfying assignment of $\varphi$ if and only if the social welfare of the optimal strategy profile is $(3n+1)B+4m-1$.
    First, assume there exists a satisfying assignment of $\varphi$. We then define a strategy profile $\veca$ along the lines of this assignment. For every variable $x_i$ that is assigned TRUE, $x_i$ prioritizes the edge $(x_i, x_i^T)$, i.e., $\tau_{(x_i,x_i^T)}^{(1)}=1$ and $\tau_{(x_i,x_i^F)}^{(1)}=0$. In the case, when $x_i$ is assigned FALSE we proceed analogously and prioritize the edge towards $x_i^F$. For each clause $\kappa_j$, choose one arbitrary clause-fulfilling literal $l_j$ and prioritize the edge $(\kappa_j,l_j)$ as described before. Then in each variable gadget, the social welfare of at least $3B$ is achieved. Also, every $\kappa_j$ pays her assets towards an unlocked player, leading to incoming flow of $m$ for $\varphi$. Thus, $\varphi$ can settle all debts leading to assets of $m-1$ for $s_{\varphi}$ and assets of 1 for every player $u_i$ in the path graph. In total, the social welfare is $(3n+1)B+4m-1$.
    
    For the other direction, assume there exists a strategy profile $\veca$ yielding social welfare of $(3n+1)B+4m-1$. In the constructed game, $n$ players each hold external assets of $B$, which are paid in full via $x_i^T$ and $x_i^F$ to sinks $s_i^T$ and $s_i^F$. This generates social welfare of $3Bn$. The social welfare can only be increased by another $B$, if flow of 1 exists on the path graph $u_1,\dots,u_B$. Since this flow must be initiated by $\varphi$, her incoming flow must be at least $m$. Thus, every $\kappa_j$ pays her external assets via an unlocked player towards $\varphi$.
\end{proof}

For clearing games and \TM\ strategies, the strategy profile with Pareto-optimal clearing state has been shown to be a strong equilibrium and, in case of max-clearing, even efficiently computable. By fixing seniorities, however, we do not only impact the computational complexity of the optimal strategy profile but also the existence of stable states. In fact, the subsequent theorem shows that there are max-clearing games with \TM\ strategies without pure Nash equilibria.

\begin{proposition}
    \label{prop:max-clearing-sen-NE}
    For \TM\ strategies with any $k\geq 2$, there exists a max-clearing game with fixed seniorities that has no pure Nash equilibrium.
\end{proposition}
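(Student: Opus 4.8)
The plan is to exhibit an explicit max-clearing game with fixed partition seniorities (already for $k=2$) that realises a \emph{generalised matching-pennies} phenomenon, and then to pad it to arbitrary $k\ge 2$. The conceptual starting point is that, by Theorem~\ref{thm:max-clearing-SE} and Corollary~\ref{cor:maxClearPoS}, the instability cannot come from the priority ordering itself: every non-fixed threshold-$\calM$ game has a strong equilibrium. Hence the whole construction must derive its strategic tension from the only remaining freedom of a fixed-partition player — how to split money among the edges of one priority class when that class is only partially fundable. The game therefore contains two ``chooser'' players $P$ and $Q$, each owing the same amount to two edges of the \emph{same} seniority, whose split decisions $p,q\in[0,1]$ are the only non-trivial choices; every other node has a single outgoing edge or edges of pairwise distinct seniority, hence a forced strategy. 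We exploit the cycle-activation effect from Example~\ref{ex:intro}: in a maximal clearing state a debt cycle carries as much circulation as the (fixed-seniority) routing permits, and which cycles are alive here is governed by where $P$ and $Q$ send their money.

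The gadget must guarantee two things. First, \emph{extremes strictly dominate interior splits.} Each channel leaving a chooser, say $P$, first passes through a node whose edge to a private sink has the \emph{higher} seniority and whose channel-continuing edge has the lower one; by the seniority rule this node is forced to forward $\max\{y-\sigma,0\}$ when it receives $y$, where $\sigma$ is the weight of the sink edge. Taking $P$'s external assets equal to $2$, both channel edges of weight $2$, and $\sigma=1$, the amount that survives both leaks and can return to $P$ equals $\max\{2p-1,0\}+\max\{2(1-p)-1,0\}$, a strictly V-shaped (convex, piecewise-linear) function of $p$ with unique maxima at $p\in\{0,1\}$ and minimum $0$ at $p=\tfrac12$; the analogous statement holds for $Q$. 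So in any profile in which a chooser plays an interior split it can strictly increase its assets by moving to an endpoint. Second, \emph{matching pennies on the committed profiles.} We route the post-leak flow of $P$'s left (resp.\ right) channel back to $P$ only when $Q$ has committed to its left (resp.\ right) channel — implemented by a bottleneck edge on that return cycle whose flow is exactly $Q$'s left (resp.\ right) surviving flow — and symmetrically for $Q$, but with the roles of $Q$'s channels \emph{swapped}, so that $Q$'s returns are enabled by $P$ committing to the \emph{opposite} channel. With the weights above one checks that, among the four committed profiles, the matched channel returns $1$ and the mismatched one returns $0$; thus $P$ strictly prefers to match $Q$'s commitment and $Q$ strictly prefers to mismatch $P$'s.

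Given the gadget, the argument closes quickly. By the first property, any pure Nash equilibrium must have both $P$ and $Q$ fully committed, so it suffices to rule out the four profiles in $\{L,R\}\times\{L,R\}$. By the second property these form a matching-pennies square: in each one, exactly one chooser earns return $0$ and can switch its commitment to earn $1$, while the other (earning $1$) has no profitable deviation, since moving to the opposite endpoint closes its open gate and any interior split strictly lowers the return by the V-shape. Hence no committed profile is an equilibrium, so the game has no pure Nash equilibrium. For general $k\ge 2$, append $k-2$ lowest-priority classes to every edge with zero weight in each (equivalently, additional thresholds equal to the edge weight); these classes never carry flow, clearing states and utilities are unchanged, and a fixed-partition player with $k$ classes is constrained exactly as before.

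The main obstacle is making the two gadget properties coexist. The leak construction makes each chooser's return convex in its \emph{own} split, but the bottleneck that implements the cross-dependency must be analysed for \emph{all} opponent splits $q\in[0,1]$, not just the endpoints: when the opponent plays an interior split the bottleneck becomes a nontrivial cap, and one must check that this does not create an interior best response or a spurious interior equilibrium, and that the maximal clearing state in every relevant profile is exactly the intended circulation (no unintended extra cycles become active). This reduces to a finite but careful case analysis of clearing states over all profiles, which is where the bulk of the verification lies; the combinatorial ``no equilibrium'' step is then immediate.
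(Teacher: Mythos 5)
The proposal does not establish the statement; there are two concrete gaps, one of which is fatal to the gadget as designed.

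First, the claimed property that ``extremes strictly dominate interior splits'' is false in your own construction. The V-shaped quantity $\max\{2p-1,0\}+\max\{2(1-p)-1,0\}$ is the flow that \emph{survives the leaks}, not the flow that \emph{returns} to $P$; it only translates into utility when the corresponding return gate is open, and the gates are opened by the \emph{opponent's} surviving flow. If $Q$ plays $q=\tfrac12$, both of $Q$'s channels deliver $0$ past their leak nodes, both of $P$'s gates stay closed, and $P$'s utility equals her external assets for \emph{every} choice of $p$ -- extremes only weakly dominate. By symmetry the same holds for $Q$, so $(p,q)=(\tfrac12,\tfrac12)$ is a pure Nash equilibrium of the gadget you describe, and the proposition is not proved. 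Weak dominance of the committed strategies does not let you restrict attention to the four corner profiles; you need that in \emph{every} profile (including interior ones) some player has a \emph{strictly} improving deviation. This is precisely why the paper's construction uses a third strategic player and a cyclic (rock--paper--scissors-like) dependency among $v_1,v_4,v_7$ with no external assets at all: the $2\times2\times2$ utility table there has a strict improvement at every committed profile, and the leaked money is routed back to strategic players so that splitting is never a best response.

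Second, the matching-pennies wiring is asserted rather than constructed, and it hides a resource conflict. With a fixed topology and fixed seniorities, $P$'s right post-leak unit must, in profile $(R,R)$, travel through $P$'s right gate and back to $P$, while in profile $(R,L)$ the \emph{same} unit on the \emph{same} edges must instead act as the key that saturates the seniority-$1$ obligation of $Q$'s left gate. A single unit following a fixed routing cannot play both roles unless the path is carefully chained so that the gate nodes' high-seniority obligations are exactly the cross-keys, and whether the intended $\{0,1\}$ return table (and no unintended cycles in the maximal clearing state) actually emerges is exactly the verification you defer. For a non-existence result the explicit network, weights, seniorities, and the full payoff table over \emph{all} strategies (not just the corners) are the proof; without them, and with the interior equilibrium above unaddressed, the argument does not go through. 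The $k\ge 2$ padding step at the end is fine, but it is the only part that is.
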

\begin{proof}
\begin{figure}[t]
\begin{subfigure}[l]{0.48\textwidth}
    \centering
    \resizebox{!}{0.8 \textwidth}{
    \begin{tikzpicture}
		\pgfmathsetmacro{\shift}{1ex}
		\node[circle, draw=black, inner sep=2.5pt] (1) at (210:3) {$v_1$};
		\node[circle, draw=black, inner sep=2.5pt] (4) at (330:3) {$v_4$};
		\node[circle, draw=black, inner sep=2.5pt] (7) at (90:3) {$v_7$};
		\node[circle, draw=black, inner sep=2.5pt] (2) at ($(1)!0.25!(4)$) {$v_2$};
		\node[circle, draw=black, inner sep=2.5pt] (3) at ($(1)!0.5!(4)$) {$v_3$};
		\node[circle, draw=black, inner sep=2.5pt] (5) at ($(4)!0.25!(7)$) {$v_5$};
		\node[circle, draw=black, inner sep=2.5pt] (6) at ($(4)!0.5!(7)$) {$v_6$};
		\node[circle, draw=black, inner sep=2.5pt] (8) at ($(7)!0.25!(1)$) {$v_8$};
		\node[circle, draw=black, inner sep=2.5pt] (9) at ($(1)!0.5!(7)$) {$v_9$};
		\node[circle, draw=black, inner sep=1pt] (10) at ($(3)!0.5!(9)$) {$v_{10}$};
		\node[circle, draw=black, inner sep=1pt] (11) at ($(3)!0.5!(6)$) {$v_{11}$};
		\node[circle, draw=black, inner sep=1pt] (12) at ($(6)!0.5!(9)$) {$v_{12}$};
		
		\draw[transform canvas={yshift=0.5*\shift},->] (1) -- (2);
		\draw[transform canvas={yshift=-0.5*\shift},->](1) -- (2);
		\draw[transform canvas={xshift=0.7*\shift, yshift=-0.4*\shift},->] (1) -- (9);
		\draw[->](1) -- (9);
		\draw[transform canvas={xshift=-0.7*\shift, yshift=0.4*\shift},->](1) -- (9);

		\draw[->] (2) to node[above, midway] {\small \textcolor{blue}{1}} (3);
		\draw[->] (2) to[bend left] node[below, midway] {\small \textcolor{blue}{2}} (1);
		
		\draw[->] (3) to node[right, midway] {\small \textcolor{blue}{1}} (10);
		\draw[->] (3) to node[right, midway] {\small \textcolor{blue}{2}} (11);
		\draw[->] (3) to[bend right=36] node[below=-2pt, midway] {\small \textcolor{blue}{3}} (4);
		\draw[->] (3) to[bend right=65] node[below, midway] {\small \textcolor{blue}{4}} (4);
		
		\draw[transform canvas={xshift=0.5*\shift, yshift=0.25*\shift},->] (4) -- (5);
		\draw[transform canvas={xshift=-0.5*\shift, yshift=-0.25*\shift},->](4) -- (5);
		\draw[transform canvas={yshift=-0.8*\shift},->] (4) --(3);
		\draw[->](4) -- (3);
		\draw[transform canvas={yshift=0.8*\shift},->](4) -- (3);
		
		\draw[->] (5) to node[right, midway] {\small \textcolor{blue}{1}} (6);
		\draw[->] (5) to[bend left] node[right, midway] {\small \textcolor{blue}{2}} (4);
		
		\draw[->] (6) to node[right, midway] {\small \textcolor{blue}{1}} (11);
		\draw[->] (6) to node[above, midway] {\small \textcolor{blue}{2}} (12);
		\draw[->] (6) to[bend right=36] node[right, midway] {\small \textcolor{blue}{3}} (7);
		\draw[->] (6) to[bend right=65] node[right, midway] {\small \textcolor{blue}{4}} (7);
		
		\draw[transform canvas={xshift=-0.5*\shift, yshift=0.25*\shift},->] (7) -- (8);
		\draw[transform canvas={xshift=0.5*\shift, yshift=-0.25*\shift},->](7) -- (8);
		\draw[transform canvas={xshift=0.7*\shift, yshift=0.4*\shift},->] (7) -- (6);
		\draw[->](7) -- (6);
		\draw[transform canvas={xshift=-0.7*\shift, yshift=-0.4*\shift},->](7) -- (6);
		
		\draw[->] (8) to node[left, midway] {\small \textcolor{blue}{1}} (9);
		\draw[->] (8) to[bend left] node[left, midway] {\small \textcolor{blue}{2}} (7);
		
		\draw[->] (9) to node[above, midway] {\small \textcolor{blue}{1}} (12);
		\draw[->] (9) to node[left, midway] {\small \textcolor{blue}{2}} (10);
		\draw[->] (9) to[bend right=36] node[left, midway] {\small \textcolor{blue}{3}} (1);
		\draw[->] (9) to[bend right=65] node[left, midway] {\small \textcolor{blue}{4}} (1);
		
		\draw[->] (10) -- ($(10)!0.84!(1)$);
		
		\draw[->] (11) -- ($(11)!0.84!(4)$);
		
		\draw[->] (12) -- ($(12)!0.84!(7)$);
	\end{tikzpicture}
	}
	\caption{A max-clearing game with \TM\ strategies for players $v_1$, $v_4$ and $v_7$ and fixed partition-$\calM$ strategies for the remaining players (blue edge labels). The game has no pure Nash equilibrium.}
	\label{fig:max-clearing-sen-NE}
\end{subfigure}
\hfill
\begin{subfigure}[r]{0.48\textwidth}
    \centering
    \setlength{\extrarowheight}{0.1cm}
    \begin{tabular}{c|c|c}
    	\multicolumn{3}{c}{$v_7$ : (1)}\\ \hline \hline 
    	\diag{0.1em}{0.5cm}{$v_1$}{$v_4$}& (1) & (2)\\ \hline
    	(1) & 2,2,2 & 2,3,2 \\ \hline
    	(2) & 3,2,2 & 3,0,1 \\
    \end{tabular} \vspace{0.6cm} \\
    \begin{tabular}{c|c|c}
    	\multicolumn{3}{c}{$v_7$ : (2)}\\ \hline \hline 
    	\diag{0.1em}{0.5cm}{$v_1$}{$v_4$}& (1) & (2)\\ \hline
    	(1) & 2,2,3 & 2,3,0 \\ \hline
    	(2) & 0,2,3 & 1,1,1 \\
    \end{tabular}
    \vspace{0.6cm}
    \caption{The game among players $v_1$, $v_4$ and $v_7$ has only two meaningful strategies for each player. Entries of the tables denote utility of $v_1,v_4$ and $v_7$, respectively.}
    \label{tab:max-clearing-sen-NE}
\end{subfigure}
    \caption{Max-clearing game without pure Nash equilibrium.}
\end{figure}
    Consider the game depicted in Figure \ref{fig:max-clearing-sen-NE} where no player owns external assets and all edges have unit weights. The labels of the edges indicate the seniorities. Note that only players $v_1,v_4$ and $v_7$ have a meaningful strategy choice, since the behavior of all others is essentially determined.
    
    Player $v_1$ has liabilities to $v_2$ and $v_9$ of same priority, thus her strategy choices can be divided into three distinct categories: (1) She directs all payments towards $v_2$ until her liabilities towards $v_2$ are completely payed off, (2) she directs all payments towards $v_9$ until her liabilities towards $v_9$ are completely payed off, or (3) she splits her payments between the two in some arbitrary way.
    A strategy from the last category is never profitable for the player, since part of her payments are either retained by $v_{10}$ or passed on from $v_9$ to $v_{12}$. On the other hand, if the player uses all her assets to pay off one debt, she can exploit the cycles and increase her assets depending on the strategies of the other players. Therefore, we restrict ourselves to categories (1) and (2). By symmetry, the same arguments apply to strategies of $v_4$ and $v_7$. The utilities of all resulting strategy profiles are illustrated in Figure \ref{tab:max-clearing-sen-NE}. For each combination of strategies, at least one player strictly benefits from a unilateral deviation. 
\end{proof}

In the game analyzed in the previous proof, it is never profitable for any player $v_1$, $v_4$ or $v_7$ to split her payments to different agents. A player always prefers to direct all money flow towards one player. Thus, the choices of $a_e^{(i)}$ are inconsequential and only one threshold per edge is needed. 

We can adjust the game in the proof of Proposition~\ref{prop:max-clearing-sen-NE} to reduce each fixed partition to only \emph{two} priority classes. To see this, consider a player $v \in \{v_3, v_6, v_9\}$ whose outgoing edges are divided into four seniority classes. We can add an auxiliary player $w$ and redirect all edges with seniority higher than 1 towards $w$. In particular, replace every edge $(v,u)$ with seniority $j$, with edges $(v,w)$ with seniority 2 and $(w,u)$ with seniority $j-1$. Then proceed to node $w$. In each step, the required number of seniority classes is reduced by one. We obtain an equivalent game where every fixed partition for a node contains only two classes.

The next result revisits the discussed game and proves that deciding existence of an equilibrium for a given max-clearing game with fixed seniorities is \classNP-hard.

\begin{theorem}\label{thm:max-clearing-sen-compute-NE}
    For a given max-clearing game with fixed seniorities and \TM\ strategies with any $k \geq 2$, it is \classNP-hard to decide whether a pure Nash equilibrium exists.
\end{theorem}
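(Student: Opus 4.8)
The plan is to reduce from \tSAT, combining two ingredients already available: the SAT‑encoding construction underlying Theorem~\ref{thm:sen-compute-OPT} and the equilibrium‑free gadget of Proposition~\ref{prop:max-clearing-sen-NE}. The guiding idea is to build a game in which the only genuinely strategic players, apart from a constant‑size ``core'' that is a fresh copy of the game in Figure~\ref{fig:max-clearing-sen-NE}, are \emph{indifferent} among all their strategies. Then the whole game has a pure Nash equilibrium if and only if the core can be stabilized, and I arrange that the core receives a stabilizing inflow precisely when the SAT‑encoding players describe a satisfying assignment, so that ``equilibrium exists'' is equivalent to satisfiability.

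Concretely, I would reuse the variable gadgets ($x_i,x_i^T,x_i^F,s_i^T,s_i^F$ with the large asset $B\gg m$), the clause players $\kappa_j$ with unit external assets, and the collector $\varphi$ from the proof of Theorem~\ref{thm:sen-compute-OPT}, together with the seniorities that force $x_i^T$ (resp.\ $x_i^F$) to forward money to $\varphi$ only when it is \emph{unlocked}, i.e., only when $x_i$ committed its full $B$ to it. As argued there, using $B\gg m$ at most one of $x_i^T,x_i^F$ can ever be unlocked, so every strategy profile induces a truth assignment; the inflow of $\varphi$ is always at most $m$, and it equals $m$ exactly when every clause routes its unit to an unlocked literal, which is achievable if and only if $\varphi$ is satisfiable. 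Instead of the path $u_1,\dots,u_B$, I route all of $\varphi$'s money along a single edge into the hub player $v_1$ of the Proposition~\ref{prop:max-clearing-sen-NE} gadget, and I add to $v_1$ one extra top‑priority liability of weight $D$ toward a new sink, with $D$ chosen so that $v_1$ can fully serve its two same‑priority gadget obligations only when its total incoming reaches $D+d_0$, where $d_0$ is the combined weight of those two obligations; that is, $D=m+c_{\max}-d_0$, with $c_{\max}$ the maximum amount of money that can circulate back into $v_1$ inside the gadget alone. Crucially, the gadget is connected to the rest of the game only through the edge $(\varphi,v_1)$, and the construction is of polynomial size.

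For correctness I would argue as follows. If $\varphi$ is satisfiable, pick a satisfying assignment, let each $x_i$ commit to the corresponding side and each $\kappa_j$ pay a true literal; then $\varphi$ receives $m$ and forwards it to $v_1$, which together with the full circulation $c_{\max}$ lets $v_1$ first clear the dummy liability $D$ and then saturate both of its gadget edges. Hence $v_1$'s choice becomes inconsequential, the cyclic best‑response structure of Figure~\ref{tab:max-clearing-sen-NE} collapses, and a direct check of the residual interaction exhibits a pure equilibrium among the remaining gadget players; every SAT‑encoding player has constant utility (the $x_i$ and $\kappa_j$ have no incoming edges, and the strategies of $x_i^T,x_i^F,\varphi$ are forced by their seniorities), so no profitable deviation exists anywhere. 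Conversely, if $\varphi$ is unsatisfiable, then in \emph{every} strategy profile $\varphi$'s inflow is strictly below $m$, so $v_1$ receives strictly less than $D+d_0$ and can never saturate both of its same‑priority edges. The key point is that the equilibrium‑free behavior of Proposition~\ref{prop:max-clearing-sen-NE} persists verbatim in this regime, since that proof only used that $v_1$ (and symmetrically $v_4,v_7$) is forced to choose between two genuinely different splits of an insufficient budget; therefore no pure Nash equilibrium exists. Combining the two directions yields \classNP-hardness, and (as in the remark after Proposition~\ref{prop:max-clearing-sen-NE}) one can additionally reduce each fixed partition to two priority classes.

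The main obstacle I expect is the calibration of $D$ together with the ``residual interaction'' check: I must verify that injecting the extra flow into $v_1$ does not merely push the equilibrium‑free cycle elsewhere in the gadget, and that once $v_1$ is neutralized the downstream players $v_4,v_7$ -- whose incoming flows change because $v_1$ now pushes $d_0>c_{\max}$ into the cycle -- really do admit a pure equilibrium; this amounts to re‑running the clearing analysis behind Figure~\ref{tab:max-clearing-sen-NE} for the modified weights, and, if neutralizing a single hub does not suffice, choosing $D$ large enough that the inflow cascades and saturates the whole gadget. A secondary subtlety is that clause players may split their unit fractionally, so $\varphi$'s inflow can be any value in $[0,m]$; the argument must therefore rest on the sharp dichotomy ``inflow $=m$'' versus ``inflow $<m$'' rather than on any numerical gap below $m$, which is exactly why the threshold $D$ is tuned to the value $m$ and why the persistence argument is phrased for an arbitrary insufficient budget rather than for the specific budget of the original gadget.
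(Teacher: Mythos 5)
Your proposal follows essentially the same route as the paper: reduce from \tSAT\ by wiring the output of the Theorem~\ref{thm:sen-compute-OPT} construction into player $v_1$ of the equilibrium-free gadget of Proposition~\ref{prop:max-clearing-sen-NE}, so that $v_1$ receives a stabilizing inflow exactly when the formula is satisfiable. The only difference is cosmetic---the paper delivers exactly one unit of assets via the existing path $u_1,\dots,u_B$ and the added edge $(u_B,v_1)$ rather than routing $\varphi$'s inflow directly and absorbing it with a calibrated top-priority dummy liability---and the residual verifications you flag (that the gadget stabilizes under full inflow and remains equilibrium-free under partial inflow) are likewise asserted rather than spelled out in the paper's proof.
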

\begin{proof}
    Consider the game with no Nash equilibrium described in proof of Proposition~\ref{prop:max-clearing-sen-NE}. In this game, we can generate an equilibrium by paying additional assets of 1 to player $v_1$. Then, $v_1$ can pay off her total liabilities when $v_4$ chooses strategy (2) and $v_7$ chooses (1). When playing according to this strategy profile, no player has an incentive to deviate unilaterally. The additional assets are generated by the construction described in proof of Theorem \ref{thm:sen-compute-OPT}. Recall, that we constructed a game for a given instance $\varphi$ of \tSAT. The player $u_B$ has assets of 1 if and only if there exists a satisfying assignment of $\varphi$. Hence by adding the edge $(u_B,v_1)$, the statement follows.
\end{proof}

Both results regarding max-clearing can be extended to min-clearing games. For non-existence of Nash equilibria, a game is constructed where three players are able to freely choose their strategies while the behavior of all other players is fully determined by seniorities.

\begin{proposition}
    \label{prop:min-clearing-sen-NE}
    For \TM\ strategies with any $k\geq 2$, there exists a min-clearing game with fixed seniorities that has no pure Nash equilibrium.
\end{proposition}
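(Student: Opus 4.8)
The plan is to adapt the construction behind Proposition~\ref{prop:max-clearing-sen-NE}, re-engineering it so that the money flows that were sustained by cyclic amplification under max-clearing are instead \emph{initiated by external assets}, which is exactly what a minimal clearing state requires. I would keep three ``free'' players $v_1,v_4,v_7$ arranged rotationally symmetrically, each owning two outgoing edges of the same (top) priority, so that each of them again has essentially only two meaningful strategies: route all of its liquidity to one of its two successors (splitting remains strictly worse because the split-off money leaks into a sink or a dead-end path, as in the earlier proof). All other players receive \emph{fixed} partition-$\calM$ strategies — two priority classes suffice by the reduction described right after Proposition~\ref{prop:max-clearing-sen-NE} — so their behaviour is forced. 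The new ingredient is a small set of source nodes carrying external assets feeding into the three gadgets, chosen so that for each of the $2^3$ strategy profiles the \emph{minimal} clearing state equals the flow realized as a \emph{maximal} clearing state in the earlier proof, up to an additive constant coming from the seed money that always circulates.

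The key steps, in order, would be: (i) place and size the external assets and fix the seniority labels on the forced players' edges so that, whatever $v_1,v_4,v_7$ choose, all injected money reaches the gadgets along saturated high-priority edges, and ``locked'' successors divert money to sinks exactly as in the max-clearing gadget — in particular so that a unit of seed money survives a full trip around one of the triangle cycles only when the strategies of $v_1,v_4,v_7$ are aligned along that cycle; (ii) for each of the eight profiles, compute the minimal clearing state and read off the utility triple $(u_{v_1},u_{v_4},u_{v_7})$, obtaining a payoff table analogous to Figure~\ref{tab:max-clearing-sen-NE}; (iii) verify that this table displays the same cyclic best-response pattern (a generalized matching-pennies among the three players), so that from every profile at least one of $v_1,v_4,v_7$ has a strictly improving unilateral deviation; (iv) conclude that no pure Nash equilibrium exists, and observe that the argument is insensitive to $k\ge 2$ and to the choice of suitable $\calM$, since each forced player never faces a genuine tie that it splits and each free player uses a single threshold per edge, so the internal strategies $a_e^{(i)}$ are inconsequential.

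The main obstacle I expect is steps (i)--(ii): under max-clearing the triangle cycles $v_1\!-\!v_4\!-\!v_7$ perform the amplification for free, whereas under min-clearing every unit of circulating flow must be traceable to a source, so the gadget must be designed so that the seed money is shunted into a sink precisely when the rotors are ``misaligned''. Tuning the weights and seniorities of the forced edges to enforce this behaviour \emph{and} simultaneously keep the resulting $2\times2\times2$ payoff table free of pure equilibria is the delicate part; once the eight minimal clearing states are pinned down correctly, the non-existence conclusion is a finite case check identical in spirit to the one in Proposition~\ref{prop:max-clearing-sen-NE}.
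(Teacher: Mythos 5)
Your high-level plan — a rotationally symmetric gadget with three free players whose best responses cycle, all other players forced by fixed partitions, and external assets injected so that the circulation survives min-clearing — is exactly the spirit of the paper's proof. However, as written the proposal has two genuine gaps. First, it never actually exhibits the construction: the entire content of the proposition is the existence of a concrete game, and you explicitly defer the delicate part (tuning weights, seniorities and asset placement so that the eight minimal clearing states produce a matching-pennies payoff table). The paper does not obtain its gadget by lightly ``seeding'' the max-clearing example; it builds a structurally different 21-node multigraph in which the external assets (of size $\delta+1$) sit at the three free players $v_1,v_5,v_9$ themselves, and the locking mechanism is that a player must push strictly more than $\delta$ along an edge before any of that money returns via a cycle (payments of at most $\delta$ are entirely absorbed by sinks $v_{13},\dots,v_{21}$).

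Second, your reduction to a $2\times2\times2$ table is not justified under min-clearing. Because strategies are continuous, a free player can split its $\delta+1$ units across its two successors in infinitely many ways, and the claim that ``splitting remains strictly worse'' cannot simply be imported from the max-clearing argument: there, splitting wastes flow that the cycles would otherwise amplify for free, whereas here the relevant question is which cycles get unlocked by exceeding the $\delta$ threshold on each edge, and a split could in principle unlock neither, one, or (with enough assets) interact differently with the opponents' choices. The paper therefore carries a third strategy category (``any other split'') through the entire case analysis, producing $3\times3\times3$ utility matrices with upper-bound entries for the split strategies, and only then concludes non-existence. Without either (a) a proof that every split strategy is strictly dominated by one of the two pure routings in \emph{every} profile of the opponents, or (b) an explicit three-category analysis, your finite case check does not cover the full strategy space and the argument is incomplete.
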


\begin{proof}
    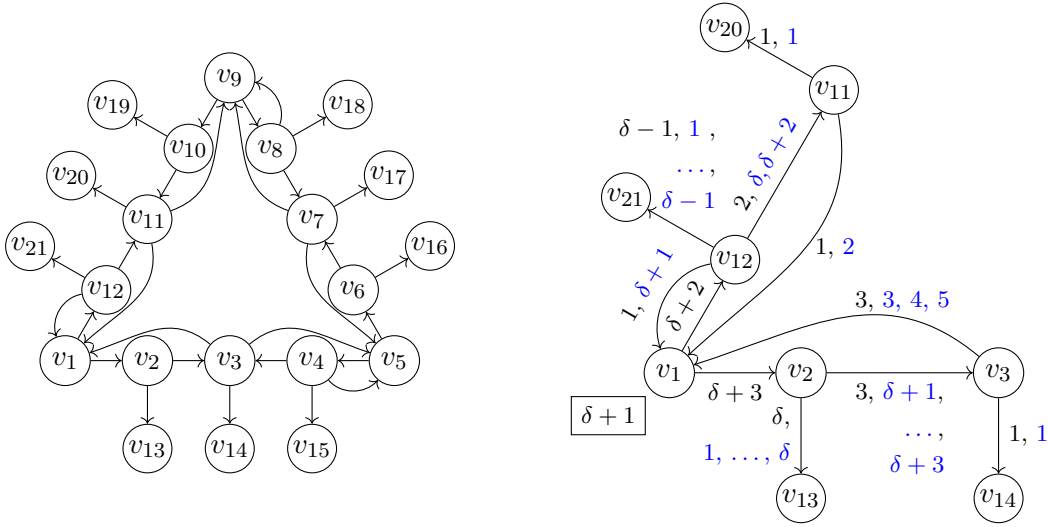
\begin{figure}[t]
    \begin{subfigure}[l]{0.45\textwidth}
    \centering
    \begin{tikzpicture}
	\def\r{2.5}
	\node[circle, draw=black, inner sep=2.5pt] (1) at (210:\r) {$v_1$};
	\node[circle, draw=black, inner sep=2.5pt] (5) at (330:\r) {$v_5$};
	\node[circle, draw=black, inner sep=2.5pt] (9) at (90:\r) {$v_9$};
	\node[circle, draw=black, inner sep=2.5pt] (2) at ($(1)!0.25!(5)$) {$v_2$};
	\node[circle, draw=black, inner sep=2.5pt] (3) at ($(1)!0.5!(5)$) {$v_3$};
	\node[circle, draw=black, inner sep=2.5pt] (4) at ($(1)!0.75!(5)$) {$v_4$};
	\node[circle, draw=black, inner sep=2.5pt] (6) at ($(5)!0.25!(9)$) {$v_6$};
	\node[circle, draw=black, inner sep=2.5pt] (7) at ($(5)!0.5!(9)$) {$v_7$};
	\node[circle, draw=black, inner sep=2.5pt] (8) at ($(5)!0.75!(9)$) {$v_8$};
	\node[circle, draw=black, inner sep=1pt] (10) at ($(9)!0.25!(1)$) {$v_{10}$};
	\node[circle, draw=black, inner sep=1pt] (11) at ($(9)!0.5!(1)$) {$v_{11}$};
	\node[circle, draw=black, inner sep=1pt] (12) at ($(9)!0.75!(1)$) {$v_{12}$};
	\node[circle, draw=black, inner sep=1pt] (13) [below=0.5cm of 2] {$v_{13}$};
	\node[circle, draw=black, inner sep=1pt] (14) [below=0.5cm of 3] {$v_{14}$};
	\node[circle, draw=black, inner sep=1pt] (15) [below=0.5cm of 4] {$v_{15}$};
	\node[circle, draw=black, inner sep=1pt, position=30:{0.5cm} from 6] (16) {$v_{16}$};
	\node[circle, draw=black, inner sep=1pt, position=30:{0.5cm} from 7] (17) {$v_{17}$};
	\node[circle, draw=black, inner sep=1pt, position=30:{0.5cm} from 8] (18) {$v_{18}$};
	\node[circle, draw=black, inner sep=1pt, position=150:{0.5cm} from 10] (19) {$v_{19}$};
	\node[circle, draw=black, inner sep=1pt, position=150:{0.5cm} from 11] (20) {$v_{20}$};
	\node[circle, draw=black, inner sep=1pt, position=150:{0.5cm} from 12] (21) {$v_{21}$};
	
	\draw[->] (1) to (2);
	\draw[->] (1) to (12);
	
	\draw[->] (2) to (3);
	\draw[->] (2) to (13);
	
	\draw[->] (3) to[in=18, out=140, looseness=1.1] (1);
	\draw[->] (3) to[in=162,out=40, looseness=1.1] (5);
	\draw[->] (3) to (14);

	\draw[->] (4) to (3);
	\draw[->] (4) to[bend right=50] (5);
	\draw[->] (4) to (15);	
	
	\draw[->] (5) to (6);
	\draw[->] (5) to (4);
	
	\draw[->] (6) to (7);
	\draw[->] (6) to (16);
	
	\draw[->] (7) to[in=282, out=160 , looseness=1.1](9);
	\draw[->] (7) to (17);
	\draw[->] (7) to[in=137,out=260, looseness=1.1] (5);
	
	\draw[->] (8) to (7);
	\draw[->] (8) to[bend right=50] (9);
	\draw[->] (8) to (18);
	
	\draw[->] (9) to (10);
	\draw[->] (9) to (8);
	
	\draw[->] (10) to (11);
	\draw[->] (10) to (19);
	
	\draw[->] (11) to[in=257, out=20, looseness=1.1] (9);
	\draw[->] (11) to[in=43, out=280, looseness=1.1] (1);
	\draw[->] (11) to (20);
	
	\draw[->] (12) to (11);
	\draw[->] (12) to[bend right=50] (1);
	\draw[->] (12) to (21);
    \end{tikzpicture}
    \end{subfigure}
    \begin{subfigure}[r]{0.45\textwidth}
    \centering
    \begin{tikzpicture}
	\def\r{2.5}
	\node[circle, draw=black, inner sep=2.5pt] (1) at (210:\r) {$v_1$};
	\node[rectangle, draw=black] (b1) [below left=0.1cm of 1] {\footnotesize $\delta+1$};
	\node[circle, draw=black, inner sep=2.5pt] (3) at (330:\r) {$v_3$};
	\node[circle, draw=black, inner sep=1pt] (11) at (90:\r) {$v_{11}$};
	\node[circle, draw=black, inner sep=2.5pt] (2) at ($(1)!0.4!(3)$) {$v_2$};;
	\node[circle, draw=black, inner sep=1pt] (12) at ($(11)!0.6!(1)$) {$v_{12}$};
	\node[circle, draw=black, inner sep=1pt] (13) [below=1cm of 2] {$v_{13}$};
	\node[circle, draw=black, inner sep=1pt] (14) [below=1cm of 3] {$v_{14}$};
	\node[circle, draw=black, inner sep=1pt, position=150:{1cm} from 11] (20) {$v_{20}$};
	\node[circle, draw=black, inner sep=1pt, position=150:{1cm} from 12] (21) {$v_{21}$};
	
	\draw[->] (1) to node[below, midway] {\footnotesize $\delta+3$} (2);
	\draw[->] (1) to node[above, midway, sloped] {\footnotesize $\delta+2$} (12);
	
	\draw[->] (2) to node[below, midway, align=right] {\footnotesize $3$, \footnotesize \textcolor{blue}{$\delta+1$},\\ \footnotesize \textcolor{blue}{$\dots$}, \\ \footnotesize \textcolor{blue}{ $\delta+3$}} (3);
	\draw[->] (2) to node[left, midway, align=right] {\footnotesize $\delta$, \\ \footnotesize \textcolor{blue}{1, \dots, $\delta$}} (13);
	
	\draw[->] (3) to[in=18, out=140, looseness=1.1] node[above=0.2cm, right] {\footnotesize 3, \footnotesize \textcolor{blue}{3, 4, 5}}  (1);
	\draw[->] (3) to node[right, midway, align=left] {\footnotesize $1$, \textcolor{blue}{1}} (14);
	
	\draw[->] (11) to[in=43, out=280, looseness=1.1] node[right, right] {\footnotesize 1, \textcolor{blue}{2}} (1);
	\draw[->] (11) to  node[above, midway] {\footnotesize $1$, \textcolor{blue}{1}} (20);
	
	\draw[->] (12) to  node[above, midway, sloped] {\footnotesize 2, \textcolor{blue}{$\delta, \delta+2$}} (11);
	\draw[->] (12) to[bend right=50] node[above, midway, sloped] {\footnotesize 1, \textcolor{blue}{$\delta+1$}} (1);
	\draw[->] (12) to  node[above, pos=0.7, align=right] {\footnotesize $\delta-1$, \textcolor{blue}{1} ,\\ \footnotesize \textcolor{blue}{$\dots$}, \\ \footnotesize \textcolor{blue}{$\delta-1$}} (21);
\end{tikzpicture}
    \end{subfigure}
    \caption{The left image visualizes the structure of a min-clearing game without Nash equilibrium. The game consists of three completely symmetric parts. In the right image, the edge weights (black edge labels), seniorities (blue edge labels) and external assets are given for one of the symmetric parts.}
    \label{fig:min-clearing-sen-NE}
    \end{figure}
    \noindent
    Consider the game visualized in Figure \ref{fig:min-clearing-sen-NE}. We interpret the depicted graph as a multigraph where all edges have unit-weights. For space reasons, $d$ edges between two players are visualized by a single edge with capacity $d$. Note that the $d$ individual edges may be assigned different seniorities. In the considered game, only the players $v_1, v_5$ and $v_9$ have a meaningful strategy choice, since the behavior of all other players is essentially determined by the fixed partition. Furthermore, only players $v_1, v_5$ and $v_9$ hold external assets and, consequently, all flow in the network originates from these three players. Now consider the strategy choices of player $v_1$. If she pays $\delta$ to $v_{12}$, all assets are passed to $v_{20}$ and $v_{21}$. Only when $v_1$ forwards at least $\delta+\epsilon$ where $\epsilon>0$ can she, in a sense, unlock cycles. Then, she can increase the flow on her incoming edges utilizing these cycles.
    
    A similar observation can be made for the payments of $v_1$ towards $v_2$. Therefore, we can divide all possible strategy choices into three categories: (1) the player pays at least $\delta+\epsilon$ to $v_{12}$, (2) the player pays at least $\delta+\epsilon$ to $v_2$, or (3) the player splits her assets in some other way. For categories (1) and (2), we further assume that $v_1$ uses all other assets to increase flow on the unlocked cycles. Due to symmetry, these categories can be generalized both for players $v_5$ and $v_9$. 
     The following utility matrices verify nonexistence of a Nash equilibrium. The entries of the matrices should be interpreted as utility of $v_1, v_5$ and $v_9$.

\vphantom{}
\begin{center}
    \begin{tabular}{x{0.5cm}|x{3.5cm}|x{3.5cm}|x{3.5cm}}
	\multicolumn{4}{c}{$v_9$ : (1)}\\ \hline \hline 
	\diag{.1em}{.5cm}{\footnotesize $v_1$}{\footnotesize $v_5$}& (1) & (2) & (3)\\ \hline
	(1) & $\delta+3,\delta+3,\delta+3$ & $\delta+3,\delta+4,\delta+3$ & $\delta+3,\delta+1,\delta+3$  \\ \hline
	(2) & $\delta+4,\delta+3,\delta+3$ & $\delta+1,\delta+4,\delta+3$ & $\leq\delta+4,\leq\delta+3,\delta+3$ \\ \hline
	(3) & $\delta+1,\delta+3,\delta+3$ & $\delta+1,\delta+4,\delta+3$ & $\delta+1,\delta+1,\delta+3$
\end{tabular}
\vspace{0.2cm}\\
\begin{tabular}{x{0.5cm}|x{3.5cm}|x{3.5cm}|x{3.5cm}}
	\multicolumn{4}{c}{$v_9$ : (2)}\\ \hline \hline 
	\diag{.1em}{.5cm}{\footnotesize $v_1$}{\footnotesize $v_5$}& (1) & (2) & (3)\\ \hline
	(1) & $\delta+3,\delta+3,\delta+4$ & $\delta+3,\delta+1,\delta+4$ & $\delta+3,\delta+1,\delta+4$  \\ \hline
	(2) & $\delta+4,\delta+3,\delta+1$ & $\delta+1,\delta+1,\delta+1$ & $\leq\delta+4,\leq\delta+3,\delta+1$ \\ \hline
	(3) & $\leq\delta+3,\delta+3,\leq\delta+4$ & $\leq\delta+3,\delta+1,\delta+4$ & $\leq\delta+3,\delta+1,\leq\delta+4$
\end{tabular}
\vspace{0.2cm}\\
\begin{tabular}{x{0.5cm}|x{3.5cm}|x{3.5cm}|x{3.5cm}}
	\multicolumn{4}{c}{$v_9$ : (3)}\\ \hline \hline 
	\diag{.1em}{.5cm}{\footnotesize $v_1$}{\footnotesize $v_5$}& (1) & (2) & (3)\\ \hline
	(1) & $\delta+3,\delta+3,\delta+1$ & $\delta+3,\leq\delta+4,\leq\delta+3$ & $\delta+3,\delta+1,\delta+1$  \\ \hline
	(2) & $\delta+4,\delta+3,\delta+1$ & $\delta+1,\leq\delta+4,\leq\delta+3$ & $\leq\delta+4,\leq\delta+3,\delta+1$ \\ \hline
	(3) & $\leq\delta+1,\delta+3,\leq\delta+1$ & $\delta+1,\leq\delta+4,\leq\delta+3$ & $\delta+1,\delta+1,\delta+1$
\end{tabular}
    \end{center}
    
\end{proof}

\begin{theorem}\label{thm:min-clearing-sen-compute-NE}
    For a given min-clearing game with fixed seniorities and \TM\ strategies with any $k\geq 2$, it is \classNP-hard to decide whether a pure Nash equilibrium exists.
\end{theorem}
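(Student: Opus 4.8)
The plan is to reduce from \tSAT, following the blueprint of the proof of Theorem~\ref{thm:max-clearing-sen-compute-NE}. Combine two gadgets: the min-clearing game $H$ of Proposition~\ref{prop:min-clearing-sen-NE}, which has no pure Nash equilibrium and in which only $v_1,v_5,v_9$ have a meaningful strategy, and the acyclic \tSAT-gadget $G_\varphi$ from the proof of Theorem~\ref{thm:sen-compute-OPT}, whose terminal path $u_1,\dots,u_B$ ends in a node $u_B$ that carries one unit of flow if and only if $\varphi$ is satisfiable. Then add the edge $(u_B,v_1)$, so that $v_1$ receives exactly one extra unit of inflow precisely when $\varphi$ is satisfiable. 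Since $G_\varphi$ is acyclic and no edge runs back from $H$ into $G_\varphi$, the combined game creates no new cycle, min-clearing inside $G_\varphi$ still coincides with max-clearing, and the construction is polynomial-time; as in the earlier proofs it needs only a bounded number of seniority classes and hence applies for every $k \ge 2$.

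The first step is to re-examine the utility tables of Proposition~\ref{prop:min-clearing-sen-NE} with the external assets of $v_1$ raised by one unit and show that the resulting game $H_1$ \emph{does} admit a pure Nash equilibrium: the extra inflow lets $v_1$ attain her best possible utility in a suitable profile (so she never wants to deviate, whatever $v_5$ and $v_9$ do), and one then checks that there is a pair of strategies for $v_5$ and $v_9$ under which neither of them can improve either, i.e.\ the three-way instability is broken. Should $H$ actually require $\Delta>1$ extra units to stabilize, route the flow reaching $\varphi$ through an extra node $y$ that owes $m-\Delta$ to a fresh sink at seniority $1$ and $\Delta$ to $v_1$ at seniority $2$ (padding $\varphi$ with dummy clauses so $m\ge\Delta$); then $v_1$ receives $\Delta$ when $\varphi$ is satisfiable and at most $\Delta-1$ otherwise, and one additionally verifies that $H$ with fewer than $\Delta$ extra units still has no equilibrium.

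The key structural observation about $G_\varphi$ is that every node with a genuine strategic choice --- each variable node $x_i$ and each clause node $\kappa_j$ --- is a source with no incoming flow, so its utility equals its fixed external assets; every other node of $G_\varphi$ (the nodes $x_i^T,x_i^F$, all sinks, $\varphi$, and the path nodes) has its payments forced by its seniorities. Hence no player of $G_\varphi$ ever has a profitable deviation, for any gadget profile, and no deviation of an $H$-player can change the flow inside $G_\varphi$. Using the analysis of Theorem~\ref{thm:sen-compute-OPT}, $\varphi$ can receive $m$ units --- and thus $v_1$ its extra unit --- exactly when each clause's unit can be routed to an ``unlocked'' variable node, which is possible precisely for satisfying assignments; every other routing delivers at most $m-1$ units to $\varphi$, hence none to $u_B$ or to $v_1$.

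Putting this together: if $\varphi$ is satisfiable, combine a gadget profile encoding a satisfying assignment (with one fulfilling literal picked per clause, so $v_1$ gets its extra unit) with the equilibrium of $H_1$; no gadget player has a profitable deviation and the $H$-players are at equilibrium given the extra inflow, so this is a pure Nash equilibrium. Conversely, if $\varphi$ is unsatisfiable, then in every profile $v_1$ gets no extra inflow, the $H$-part is a profile of the original game $H$, and the profitable deviation of some $H$-player guaranteed by Proposition~\ref{prop:min-clearing-sen-NE} remains available and profitable in the combined game; hence no pure Nash equilibrium exists. I expect the delicate point to be the first step --- determining exactly how much extra inflow $v_1$ needs and certifying through the perturbed utility matrices both that this makes $H$ stable and (if the threshold exceeds one unit) that smaller amounts leave the instability intact; the rest is a direct adaptation of Theorems~\ref{thm:sen-compute-OPT} and~\ref{thm:max-clearing-sen-compute-NE}.
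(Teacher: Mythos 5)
Your reduction is essentially the paper's: graft the acyclic \tSAT\ gadget of Theorem~\ref{thm:sen-compute-OPT} onto the no-equilibrium min-clearing game of Proposition~\ref{prop:min-clearing-sen-NE} via edges $(u_B,v_1)$, so that $v_1$ is stabilized exactly when $\varphi$ is satisfiable; your structural observations (the strategic gadget players are sources whose utility is independent of their choice, no back-edges into the gadget) match what the paper's argument implicitly relies on. The one substantive divergence is the multi-unit delivery. The paper asserts that $v_1$ needs strictly more than one extra unit (at least $k-1+\epsilon_f$ with $\epsilon_f>0$), so your primary one-unit plan would not close the argument and your fallback becomes essential. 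There the paper's mechanism is cleaner than yours: it attaches $k$ disjoint copies of the \tSAT\ gadget, each of which delivers its unit to $v_1$ in an all-or-nothing fashion, so in the unsatisfiable case $v_1$ receives exactly $0$ extra assets and Proposition~\ref{prop:min-clearing-sen-NE} applies verbatim. Your single modified output node $y$ (owing $m-\Delta$ at seniority $1$ and $\Delta$ to $v_1$ at seniority $2$) can leak anywhere between $0$ and $\Delta-1$ units to $v_1$ when $\varphi$ is unsatisfiable, depending on how many clauses the chosen profile happens to satisfy, so you would additionally have to re-verify the perturbed utility tables for every achievable intermediate inflow --- an obligation you correctly flag but do not discharge. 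Switching to the all-or-nothing copies makes that extra work unnecessary.
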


\begin{proof}
    Consider the game without Nash equilibrium in Figure \ref{fig:min-clearing-sen-NE}. In this game, there exists an equilibrium if the assets of $v_1$ are increased by at least $k-1+\epsilon_f$, for $\epsilon_f>0$. Then, she can pay of all her liabilities when $v_5$ plays according to (2) and $v_9$ according to (1). No player has an incentive to deviate unilaterally to another strategy profile.
    Towards \classNP-hardness, we proceed similarly to proof of Theorem \ref{thm:max-clearing-sen-compute-NE}. Here, the game without equilibrium has been extended to include a construction that directs exactly assets of 1 into the game if there exists a satisfying assignment for an instance from \tSAT. We now generate additional assets of $k$ for player $v_1$ by adding k copies of the construction, each with edge $(u_B, v_1)$. 
\end{proof}

%
%
\bibliographystyle{abbrv}
\bibliography{literature}

\end{document}